\DeclareMathAlphabet{\mathcal}{OMS}{cmsy}{m}{n}
\renewcommand{\ALG@beginalgorithmic}{\footnotesize}
\crefname{appendix}{Section}{Sections}
\newlength{\squeezelen}
\newenvironment{longversion}{
  \setcounter{result}{0}
  \setcounter{algorithm}{0}
  \newcommand\squeeze[1]{}
  \newcommand\longonly[1]{##1}
  \let\shortonly\@gobble
}{}
\def\clap#1{\hbox to 0pt{\hss#1\hss}}
\newcommand{\eps}{\varepsilon}
\newcommand{\FF}{\mathbb{F}}
\newcommand{\cA}{\mathcal{A}}
\newcommand{\cC}{\mathcal{C}}
\newcommand{\cF}{\mathcal{F}}
\newcommand{\cG}{\mathcal{G}}
\newcommand{\cI}{\mathcal{I}}
\newcommand{\cT}{\mathcal{T}}
\newcommand{\cV}{\mathcal{V}}
\newcommand{\cX}{\mathcal{X}}
\newcommand{\bx}{\mathbf{x}}
\newcommand{\tdelta}{\widetilde{\delta}}
\newcommand{\tO}{\widetilde{O}}
\renewcommand{\ge}{\geqslant}
\renewcommand{\le}{\leqslant}
\renewcommand{\geq}{\geqslant}
\renewcommand{\leq}{\leqslant}
\newcommand{\etal}{{et al.}\xspace}
\newcommand{\PP}{\mathsf{P}}
\newcommand{\NP}{\mathsf{NP}}
\newcommand{\Cov}{\mathit{Coverer}}
\newcommand{\Cv}{\mathit{Cov}}
\newcommand{\Sol}{\mathit{Sol}}
\newcommand{\Opt}{\mathit{Opt}}
\newcommand{\Alt}{\mathit{Alt}}
\newcommand{\Bac}{\mathit{Backup}}
\newcommand{\ang}[1]{\langle{#1}\rangle}
\newcommand{\ceil}[1]{\lceil{#1}\rceil}
\newcommand{\floor}[1]{\lfloor{#1}\rfloor}
\newcommand{\ed}[2]{\llbracket{#1}:{#2}\rrbracket}
\newcommand{\proc}[1]{\textsc{#1}}
\newcommand{\ontopt}[2]{\genfrac{}{}{0pt}{1}{#1}{#2}}
\newcommand{\class}[1]{\ang{#1}}
\newtheorem{theorem}{Theorem}[section]
\newtheorem{result}{Result}
\newtheorem{observation}[theorem]{Observation}
\newtheorem{lemma}[theorem]{Lemma}
\theoremstyle{definition}
\newtheorem{definition}[theorem]{Definition}
\DeclareMathOperator{\poly}{poly}
\DeclareMathOperator{\R}{R}
\newcommand{\setcover}{\textsc{set-cover}\xspace}
\newcommand{\partcover}{\textsc{partial-cover}\xspace}
\newcommand{\mpj}{\textsc{mpj}\xspace}
\newcommand{\idx}{\textsc{idx}\xspace}
\newcommand{\plr}{\textsc{plr}\xspace}
\title{Incidence Geometries and the Pass Complexity of Semi-Streaming Set Cover}
\author{Amit Chakrabarti%
\thanks{Department of Computer Science, Dartmouth College. 
Work supported in part by NSF Award CCF-1217375.}%
\and Anthony Wirth%
\thanks{Department of Computing and Information Systems, The University of Melbourne. 
Supported in part by an ARC Future Fellowship.}%
}
\date{}
\begin{document}

\maketitle

\begin{abstract} 
  \linespread{1.1}\selectfont    
  Set cover, over a universe of size~$n$, may be modelled as a
  data-streaming problem, where the~$m$ sets that comprise the instance
  are to be read one by one.  A semi-streaming algorithm is allowed only
  $O(n \poly\{\log n, \log m\})$ space to process this stream. For each
  $p \ge 1$, we give a very simple deterministic algorithm that makes~$p$ passes
  over the input stream and returns an
  appropriately certified $(p+1)n^{1/(p+1)}$-approximation to the
  optimum set cover. More importantly, we proceed to show that this
  approximation factor is essentially tight, by showing that a factor
  better than $0.99\,n^{1/(p+1)}/(p+1)^2$ is unachievable for a $p$-pass
  semi-streaming algorithm, even allowing randomisation. In particular,
  this implies that achieving a $\Theta(\log n)$-approximation requires
  $\Omega(\log n/\log\log n)$ passes, which is tight up to the
  $\log\log n$ factor.

  These results extend to a relaxation of the set cover problem where we
  are allowed to leave an~$\eps$ fraction of the universe uncovered: the
  tight bounds on the best approximation factor achievable in~$p$ passes
  turn out to be $\Theta_p(\min\{n^{1/(p+1)}, \eps^{-1/p}\})$.

  Our lower bounds are based on a construction of a family of high-rank
  incidence geometries, which may be thought of as vast generalisations
  of affine planes. This construction, based on algebraic techniques,
  appears flexible enough to find other applications and is therefore
  interesting in its own right.
\end{abstract}

\thispagestyle{empty}
\addtocounter{page}{-1}
\newpage

\begin{longversion}
\section{Introduction}

The {\em set cover} problem is one of the most basic and well-studied
optimisation problems in computer science.  It features either directly
or in various guises in a wide array of applications, such as facility
location, information retrieval~\cite{AnagnostopoulosBBLMS15}, software
test selection, and tableau generation~\cite{GolabKKSY08}.  It is also
at the heart of a rich theory spanning approximation
algorithms~\cite{Vazirani-book} and computational complexity
theory~\cite{AroraBarak-book}, where efforts to understand the
complexity of set cover have led to interesting combinatorial and
mathematical interactions.  In this work, we consider set cover as a
``big data'' problem; specifically, we are concerned with space-efficient
algorithms for set cover in the well-established {\em data streaming}
model~\cite{Muthukrishnan05,BabcockBDMW02}. This setting has been
studied in several recent works, including Saha \& Getoor~\cite{SahaG09},
Emek \& Ros\'en~\cite{EmekR14}, and Demaine \etal~\cite{DemaineIMV14}.

An instance of set cover is given by a pair $(\cX,\cF)$, where~$\cX$ is
a finite universe with cardinality $|\cX| = n$ and $\cF \subseteq 2^\cX$
is a finite collection (multiset) of subsets of~$\cX$ with cardinality
$|\cF| = m$.
The pair $(\cX,\cF)$ satisfies the guarantee that the sets
in $\cF$ together cover $\cX$, i.e., $\bigcup_{S \in \cF} S = \cX$. A
candidate solution to the instance is a subcollection $\Sol \subseteq
\cF$; it is said to be {\em feasible} if $\bigcup_{S \in \Sol} S = \cX$.
Its {\em cost} is defined to be the cardinality $|\Sol|$.%
\footnote{In {\em weighted} set cover,
each set $S \in \cF$ has a cost or {\em weight} $w(S)
\ge 0$ and the cost of $\Sol$ is $\sum_{S \in \Sol} w(S)$.
The major contributions of this work being lower bounds, we focus on the
purely combinatorial setting, which is of course a strength for lower
bounds.}
%
The desired goal is to find a feasible solution while keeping its cost small. A
feasible solution with minimum possible cost is said to be {\em optimal} and
its cost is called the {\em optimum cost} or {\em optimum value} of the
instance.  Henceforth we shall call this problem $\setcover_{n,m}$, or simply
$\setcover$.

It is well-known that finding an optimal solution to $\setcover$ is
$\NP$-hard~\cite{Karp72}; that finding an $\alpha$-approximate
solution---defined as a feasible solution whose cost is at most $\alpha$
times that of the optimum---is possible in polynomial time for $\alpha =
\ln n - \ln\ln n + \Theta(1)$~\cite{Slavik96}; and that doing so for
$\alpha < (1-\eps)\ln n$ is impossible unless $\NP =
\PP$~\cite{DinurS14}. Thus, for traditional Turing Machine computation,
the complexity of \setcover is essentially fully understood. However,
for genuinely huge instances of \setcover, additional considerations
become important: how will the data be accessed and how will it be
manipulated in a relatively small amount of working memory?

This motivates a careful study of the complexity of \setcover in a
data-streaming setting.  The instance $(\cX,\cF)$ is
presented as a stream consisting of the sets in~$\cF$, one at a time;
the universe~$\cX$ is known in advance, so we may assume that $\cX = [n]
:= \{1,2,\ldots,n\}$.  Representing an instance of $\setcover_{n,m}$
requires $\Theta(mn)$ bits in general.  Thus, in $\Theta(mn)$ bits of
space (working memory), we could simply run our favourite offline
algorithm.  The challenge is to work with {\em sublinear}---i.e.,
$o(mn)$---space. A {\em $p$-pass} algorithm may read its input stream up
to $p$ times; this parameter $p$, sometimes called the pass complexity,
ought to be a small constant, or perhaps $O(\log n)$.  Of course, in
addition to space and pass efficiency, we would also want our algorithms
to process each set quickly, with very simple operations and logic.

Since~$\Omega(n)$ space is required simply to certify
that a computed solution is feasible, we shall think of an
algorithm as highly space-efficient if it uses $\tO(n) := O(n
\poly\{\log n,\log m\})$ space.  Following a convention started with the
study of streaming graph algorithms~\cite{FeigenbaumKMSZ05}, and
continued in this context by Emek \& Ros\'{e}n~\cite{EmekR14}, we shall
call such an algorithm a {\em semi-streaming algorithm}. Emek \& Ros\'en
undertook a detailed study of {\em one-pass} semi-streaming algorithms
for \setcover, obtaining nearly tight bounds on the best approximation
ratio achievable by such algorithms. In this work, we provide tight
bounds for the multi-pass case, giving an almost complete understanding
of the pass/approximation tradeoff for semi-streaming algorithms. In
particular, this answers an open question explicitly raised by Saha \&
Getoor~\cite{SahaG09}.

\subsection{Our Results and Techniques}

A classic result of Johnson~\cite{Johnson74}, refined by
Slav\'ik~\cite{Slavik96}, gives a $(\ln n - \ln \ln n +
\Theta(1))$-approx\-imation to \setcover by a greedy algorithm. Given an
instance $(\cX,\cF)$, at each step, it adds to the current solution the
set from~$\cF$ that {\em contributes} most, i.e., covers the largest
number of as-yet-uncovered elements.
Notice that this can be
implemented as a semi-streaming algorithm, using one pass for each step,
but this leads to~$\Omega(n)$ passes, which is ridiculously expensive.
Saha \& Getoor~\cite{SahaG09} gave a different algorithm, which
guarantees an $O(\log n)$-approximation using only $O(\log n)$ passes.
Emek \& Ros\'{e}n~\cite{EmekR14} asked how good an approximation is
possible for a one-pass semi-streaming algorithm.  They showed that an
approximation ratio of $O(\sqrt n)$ is achievable and that the ratio
must be $\Omega(n^{1/2-\delta})$ for every constant $\delta > 0$;
\Cref{long_sec:related} adds some detail.  Our first result generalises their
upper bound, trading off additional passes for improved approximation.

\begin{result}[Formalised as \Cref{long_thm:ub}] \label{long_res:ub}
  In~$p$ passes, within semi-streaming space bounds, we can compute a
  $(p+1)n^{1/(p+1)}$-approximate solution to \setcover together with an
  appropriate ``certificate of coverage.''
\end{result}

\longonly{
The algorithm behind \Cref{long_res:ub} is a variant of the greedy approach
wherein each pass picks sets that contribute above some well-chosen
threshold for that pass, and the sequence of thresholds is geometrically
decreasing. This kind of thresholding is itself a variant of ideas
introduced by Cormode, Karloff, and Wirth~\cite{CormodeKW10} in a
non-streaming context. Our algorithm needs one final ``folding'' trick
that considers the final two thresholds in the sequence in a single
pass.
}

The Emek--Ros\'en algorithm solves a more general problem, with set
weights and a relaxed feasibility condition (partial coverage, which we
describe below).  For the basic combinatorial \setcover problem, our
algorithm nevertheless makes a (small) contribution even in the one-pass
case, with the simplicity of its logic as compared to Emek--Ros\'en: our
logic, being a variant of the basic greedy approach, is arguably easier
to implement and analyse.  But most importantly, this algorithm sets the
stage for our main result, which gets at the {\em pass complexity} of
the problem.

\begin{result}[Main result, formalised as \Cref{long_thm:lb}] \label{long_res:lb}
  In~$p$ passes, approximating the optimum of a \setcover instance to a
  factor smaller than $0.99\,n^{1/(p+1)}/(p+1)^2$ requires more than
  semi-streaming space. This applies even to the decision problem of
  distinguishing a small optimum value from a large one.
\end{result}

\Cref{long_res:ub,long_res:lb} together provide a near-complete understanding of
the power of each additional pass in improving the quality of an
approximate solution to \setcover. Saha \& Getoor had posed the problem
of obtaining this kind of tradeoff as an open question. \Cref{long_res:lb}
immediately implies that obtaining an $O(\log n)$-approximation under
semi-streaming space bounds requires $\Omega(\log n/\log\log n)$ passes,
almost matching the pass complexity of the Saha--Getoor algorithm (or,
for that matter, the algorithm behind our \Cref{long_res:ub}).

In establishing \Cref{long_res:lb}, we invent a family of novel combinatorial
structures that we call {\em edifices}. To explain these, we first
consider $p = 1$. In this case, an $\Omega(\sqrt n)$ bound follows from
a reduction from the \textsc{index} problem in communication complexity,
via set systems based on affine planes of finite order.%
\footnote{Emek \& Ros\'en also use affine planes, but differently, and
obtain an $\Omega(n^{1/2-\delta})$ bound versus our $\sqrt n/(4+\delta)$.}%
\shortonly{
A ``hard instance'' for one pass consists of one ``large'' set and many
``medium'' sets with very small pairwise intersections. The family of
lines in $\FF^2$, where $\FF$ is a finite field, gets us most of the way
towards these desired properties. To generalise this to $p > 1$, we
reduce from the multi-party communication problem
\textsc{pointer-jumping}. This calls for a more elaborate set system
with sets of many different sizes and a tree-like incidence structure,
plus a small-intersection property as before.
}%
\longonly{
A ``hard instance'' for one pass consists of a family of sets of two
different sizes: one ``large'' set and many ``medium'' sets with very
small pairwise intersections.  The family of lines in $\FF^2$, where
$\FF$ is a finite field, gets us most of the way towards the desired
properties. To generalise this to $p > 1$, we reduce from the
multi-party communication problem \textsc{pointer-jumping}. For this
reduction, we need a more elaborate set system with sets of many
different sizes (similar to contribution thresholds in the multi-pass
algorithms) and a tree-like incidence structure, plus a
small-intersection property as before.
}%
Very roughly, for $p = 2$, we start by considering quadric surfaces
inside $\FF^3$, and then lines in $\FF^2$ lifted onto these surfaces;
for higher $p$, we start with the appropriate extensions of these ideas
to higher-degree algebraic varieties. These varieties form a certain
incidence geometry---we call it an edifice---that is a vast
generalisation of affine planes. Bounding the sizes of certain pairwise
intersections between these varieties is the most technical part of this
work.

Following Emek \& Ros\'en, we also study {\em partial} set covers.  In
the $\partcover_{n,m,\eps}$ problem, an instance $\cI = (\cX,\cF,\eps)$
consists of $\cX$, $\cF$, and a parameter $\eps \in [0,1]$.  We require
a $(1-\eps)$-partial cover of~$\cX$: a collection $\Sol \subseteq \cF$
that covers at least $(1-\eps)|\cX|$ elements.  A solution $\Sol$ is
$\alpha$-approximate if $|\Sol| \le \alpha |\Opt|$, where $\Opt$ is a
minimum-cost \emph{total} set cover for $(\cX,\cF)$.

\begin{result}[Formalised as \Cref{long_thm:ub-partial,long_thm:lb-partial}] \label{long_res:partial}
  The smallest $\alpha$ for which a semi-streaming algorithm can compute
  an $\alpha$-approximate $(1-\eps)$-partial cover is in
  $\Theta_p(\min\{n^{1/(p+1)}, \eps^{-1/p}\})$. The lower bound applies
  to a decision problem of distinguishing a small total cover from a
  necessarily large partial cover.
\end{result}

The upper bound in \Cref{long_res:partial} builds on the one-pass
Emek--Ros\'en algorithm; thus we lose the extreme simplicity of the
algorithm behind \Cref{long_res:ub}, but gain the ability to handle {\em
weighted} instances.
The main contribution is again the
lower bound. It requires a reexamination of the edifices constructed for
establishing \Cref{long_res:lb} and proving that they satisfy additional
geometric properties. These properties then allow us to build new
edifices with different parameters that are suited to the problem at
hand. This construction shows the power of the axiomatic approach we
take in defining edifices.

We note in passing the minor result (formalised as \Cref{long_thm:comm}) that a tweak
to \Cref{long_res:lb} gives a rounds/ap\-prox\-imation tradeoff for a two-player
communication version of \setcover \`a la Nisan~\cite{Nisan02} and
Demaine \etal~\cite{DemaineIMV14}.

\shortonly{\input{related}}
\longonly{\subsection{Related Work} \label{long_sec:related}

The quantification of savings afforded by extra streaming passes dates
back to Munro \& Paterson~\cite{MunroP80}, who studied pass/space
tradeoffs for median-finding. This general topic remains
current~\cite{GuhaM07,ChakrabartiCM08,MagniezMN10,ChakrabartiCKM10}.

Efforts to understand the hardness of \setcover have led to many deep
insights and connections with various kinds of mathematics. Our
technical contributions continue this tradition. 
In the series of hardness-of-approximation results beginning with Lund
\& Yannakakis~\cite{LundY94,Feige98,Moshkovitz12}, recently culminated in
Dinur \& Steurer~\cite{DinurS14}, each result required new insights into
PCPs and parallel repetition; for details, see the latter paper and the
references therein.
Closer to this work, Nisan~\cite{Nisan02} initiated the study of
\setcover as a (two-player) communication problem and showed that, for
every constant $\delta > 0$, computing a $(\frac12-\delta)\log_2
n$-approximation to $\setcover_{n,m}$ requires $\Omega(m)$ randomised
communication.  His ``hard instances'' used $m \approx \exp(\sqrt n)$.
Nisan's original motivation was combinatorial auctions, but his result
can be interpreted in the data-streaming setting as saying that a
semi-streaming $(\frac12-\delta)\log_2 n$-approximation is impossible,
regardless of the number of passes.  Demaine \etal~\cite{DemaineIMV14}
showed that {\em deterministic} streaming algorithms achieving a
$\Theta(1)$-approximation require $\Omega(mn)$ space, thereby ruling out
sublinear-space solutions altogether. 

All of the above lower bounds have, at their core, some variant of an
old combinatorial construction: namely, that of a set system with the
so-called $r$-covering property~\cite{LundY94}.
Our own combinatorial constructions (of edifices) play an analogous role
in our lower bounds, but are quite different at a technical level. In
particular, they result in $\setcover_{n,m}$ instances where $m =
n^{\Theta(1)}$.  Their closest relative is the construction in Emek \&
Ros\'en~\cite{EmekR14} based on lines in an affine plane.

Turning to upper bounds, traditional (offline) approximation algorithms
for \setcover are discussed at length in Vazirani~\cite{Vazirani-book};
see also Slav\'ik~\cite{Slavik96} and the references therein. Alon
\etal~\cite{AlonAABN03} studied \setcover in an online setting,
focussing on competitive ratios rather than space considerations, but
under a fundamentally different input model: the sets are known in
advance and elements of the universe $\cX$ arrive in a stream. The
setting we study was first considered by Saha \& Getoor~\cite{SahaG09},
who called it ``set streaming.'' They gave a $4$-approximation algorithm
for \textsc{max-$k$-coverage}, the problem of choosing $k$ sets from the
stream so as to maximise the cardinality of their union. Iterating this
algorithm for $O(\log n)$ passes immediately gives an $O(\log
n)$-approximation for \setcover.
Cormode, Karloff, and
Wirth~\cite{CormodeKW10}, targeting external-memory efficiency,
developed a ``disk-friendly greedy'' (DFG) algorithm for \setcover. In
short, each step of DFG adds some set whose contribution is at least
$1/\beta$ times the maximum. As designed, DFG yields an $O(\log_\beta
n)$-pass, $(1 + \beta \ln n)$-approximate, $O(n \log n)$-space streaming
algorithm.

The single-pass semi-streaming setting was first, and thoroughly,
studied by Emek \& Ros{\'e}n~\cite{EmekR14}.  Indeed, their results
extend to \partcover, as well as item- and set-weighted variants. Their
algorithm, like ours, computes a {\em certificate of coverage} that
indicates, for each item, which set (if any) covers it: the implied
solution $\Sol$ covers a $1-\eps$ (weighted) fraction of $\cX$ and has
$w(\Sol) = O(\min\{1/\eps,\sqrt n\}w(\Opt))$.  
On the lower bound side, they prove that for every $\eps \ge 1/\sqrt n$,
a randomised semi-streaming algorithm that {\em certifies} an
(unweighted) $\alpha$-approximate $(1-\eps)$-cover must have $\alpha =
\Omega(1/\eps)$. Outputting only the sets in a solution (without a
certificate) still requires $\alpha = \Omega(\eps^{-1}\log\log n/\log
n)$. The still-weaker problem of approximating the optimum {\em value}
requires $\alpha = \Omega(n^{1/2-\delta})$ for every constant $\delta >
0$. Emek \& Ros\'en remark~\cite[footnote~3]{EmekR14} that they can show
this only for \setcover, and not for $(1-\eps)$-\partcover with $\eps
\gg 1/\sqrt n$. Compare these lower bounds with our
\Cref{long_res:lb,long_res:partial}, specialised to $p = 1$.

The main result of Demaine \etal~\cite{DemaineIMV14}, whose
deterministic lower bound we have discussed, is a {\em randomised}
sublinear-space, though not semi-streaming, algorithm for \setcover. It
achieves an $O(4^{1/\delta})$-pass, $O(4^{1/\delta}\rho)$-approximation
in $\tilde{O}(mn^\delta)$ space, where $\rho$ is the approximation ratio
of whatever offline \setcover algorithm we are prepared to run.

}


\section{A Simple Deterministic Multi-Pass Algorithm} \label{long_sec:ub}

\paragraph{Model of computation.}
An instance of $\setcover_{n,m}$
consists of sets $S_1,\ldots,S_m \subseteq [n]$, specified as a stream
of tokens $(i,S_i)$, where~$S_i$ is described in some reasonable way
(either as a list of its elements or as a characteristic vector) and~$i$
is the ID of~$S_i$. The IDs need not appear in the order $1,2,\ldots,m$.
The desired output is a set $\Sol \subseteq [m]$ consisting of the IDs
of sets that together cover~$[n]$, plus a certificate: an array
$\Cov[1\ldots n]$ in which, for each~$x$, $\Cov[x]$ is the ID of a set
that covers~$j$.  Strictly speaking, $\Sol$ is redundant because it can
be computed from $\Cov$, but keeping track of it explicitly aids
exposition.

Recall that a semi-streaming algorithm is allowed $\tO(n) := O(n
\poly\{\log n,\log m\})$ bits of space. This clearly suffices to
represent each of $\Sol$ and $\Cov$, which need only $\Theta(n\log m)$
bits, under the sensible assumption that $|\Sol| \le n$.  An ideal
semi-streaming algorithm for \setcover would use no more space than
this, asymptotically, and our \Cref{long_alg:basic,long_alg:folded} achieve
this space bound.

\longonly{\subsection{Algorithm and Analysis}}

\shortonly{\paragraph{Algorithm and analysis.}}
As promised, we begin by giving a very simple deterministic $p$-pass,
semi-streaming, ``progressive greedy'' algorithm that returns a
$(p+1)n^{1/(p+1)}$-approximation. The basic idea is that the first pass
is very conservatively greedy, picking a set into the solution iff its
{\em contribution} is at least some large number~$\tau_1$ (i.e., it
covers at least~$\tau_1$ as-yet-uncovered elements); the second pass
repeats this logic with a threshold $\tau_2 < \tau_1$, making it
slightly less conservative; and so on. Choosing suitable thresholds gets
us to a $p$-pass $pn^{1/p}$-approximation. This is the na\"ive version
of progressive greedy. Our final algorithm ``folds'' the last two passes
of this na\"ive version into a single pass, achieving the desired bound.

\algrenewcommand\algorithmicforall{\textbf{foreach}}
\begin{algorithm}[!ht]
  \caption{~~Na\"ive version of ``progressive greedy'' algorithm for \setcover, in $p$ passes \label{long_alg:basic}}
  \begin{algorithmic}[1]
    \Procedure{GreedyPass}{stream $\sigma$, threshold $\tau$, set $\Sol$, array $\Cov$}
    \ForAll{$(i,S)$ in $\sigma$}
	\State $C \gets \{x:\, \Cov[x] \ne 0\}$ 
	  \Comment{$C$ is the set of already covered elements}
        \If{$|S \setminus C| \ge \tau$} \label{long_line:contrib}
          \State $\Sol \gets \Sol \cup \{i\}$
          \ForAll{$x \in S \setminus C$}
	    $\Cov[x] \gets i$
	  \EndFor
        \EndIf
    \EndFor
    \EndProcedure
    \Statex \vspace{-5pt}

    \Procedure{ProgGreedyNaive}{stream $\sigma$, integer $n$, integer $p \ge 1$}
    \State $\Cov[1\ldots n] \gets 0^n$;~ $\Sol \gets \varnothing$
    \For{$j = 1 ~\textbf{to}~ p$}
      \Call{GreedyPass}{$\sigma, n^{1-j/p}, \Sol, \Cov$}
    \EndFor \vspace{-2pt}
    \State \textbf{output} $\Sol,\Cov$
    \EndProcedure
  \end{algorithmic}
\end{algorithm}

For ease of reading we have not optimised the per-token processing time
in our pseudocode. Clearly, in each pass, each set~$S$ can be processed
in $O(|S|)$ time in a RAM-style machine with word size $\Omega(\log m)$.

To analyse \Cref{long_alg:basic}, fix an arbitrary instance of
$\setcover_{n,m}$.  Each call to \proc{GreedyPass} makes a single pass
through $\sigma$, considering every set~$S$.  Note that the contribution
of~$S$ in such a pass is the quantity $|S \setminus C|$, computed in
\cref{long_line:contrib}, which is the number of \emph{new} elements that~$S$
covers.  Let $\Opt \subseteq [m]$ be an optimum solution.  For ease of
exposition we will pretend that $\Opt$ and $\Sol$ are collections of
sets from the input instance (they are in fact collections of IDs of
such sets).

\begin{definition} \label{long_def:bdd-pass}
  A $(\tau,\rho)$-bounded pass is a run of \proc{GreedyPass} with
  threshold~$\tau$ where, if~$C_0$ is
  the set of covered elements at the start of the pass, then for all~$S$
  in~$\sigma$ we have $|S \setminus C_0| \le \rho\tau$.
\end{definition}

\begin{lemma} \label{long_lem:bdd-pass}
  A $(\tau,\rho)$-bounded pass adds at most $\rho|\Opt|$ sets to
  $\Sol$.
\end{lemma}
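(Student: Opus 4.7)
The plan is a short double-counting argument, playing the lower bound on the contribution of each added set against the upper bound $\rho\tau$ enforced by the bounded-pass assumption, with $\Opt$ as the intermediary.

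First I would fix notation: let $C_0$ be the set of already-covered elements at the start of the pass, let $U = [n] \setminus C_0$ be the as-yet-uncovered universe at that moment, and let $A$ denote the collection of sets added to $\Sol$ during this pass. By the test on \cref{long_line:contrib}, each set added contributes at least $\tau$ newly covered elements, where ``newly covered'' is measured against the \emph{current} (mid-pass) cover array. Crucially, these sets of newly covered elements are pairwise disjoint across the pass---once an element is marked in $\Cov$, no later addition claims credit for it---and every such element lies in $U$. Summing contributions therefore yields
\[
|A|\,\tau \;\le\; \sum_{i \in A} \bigl|S_i \setminus C_i\bigr| \;\le\; |U|,
\]
where $C_i$ is the cover at the moment $S_i$ is processed.

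Next I would use $\Opt$ to bound $|U|$ from above. Since $\Opt$ is a feasible cover of $[n]$, it certainly covers $U$, so $U \subseteq \bigcup_{S \in \Opt}(S \setminus C_0)$. Every $S \in \Opt$ appears in the stream $\sigma$, so the $(\tau,\rho)$-bounded assumption (\Cref{long_def:bdd-pass}) applies to it and forces $|S \setminus C_0| \le \rho\tau$. Union-bounding gives
\[
|U| \;\le\; \sum_{S \in \Opt} |S \setminus C_0| \;\le\; \rho\tau\,|\Opt|.
\]
Chaining the two inequalities produces $|A|\,\tau \le \rho\tau\,|\Opt|$, and dividing by $\tau$ yields the claimed $|A| \le \rho\,|\Opt|$.

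There is no real obstacle; the only point requiring a hint of care is the disjointness step, where one must distinguish the running cover $C_i$ (against which the threshold $\tau$ is checked in the pseudocode) from the snapshot $C_0$ (against which $\rho\tau$ bounds contributions). Both inequalities go through unchanged because elements newly covered during the pass are a disjoint union of subsets of $U$, so the proof is essentially one line once the framing is set up.
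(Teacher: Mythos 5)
Your proof is correct and follows essentially the same double-counting argument as the paper: lower-bound $|\Opt|$ by $|D|/(\rho\tau)$ using the bounded-pass hypothesis, and upper-bound the number of added sets by $|D|/\tau$ via their (disjoint) contributions. Your explicit treatment of the disjointness of the mid-pass contributions is a point the paper leaves implicit, but the argument is the same.
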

\squeeze{\squeezelen}
\begin{proof}
  Put $D = [n]\setminus C_0$.  Each set in $\Opt$ includes at most
  $\rho\tau$ of the elements in~$D$, yet the sets in $\Opt$ together
  cover~$D$. Therefore $|\Opt| \ge |D|/(\rho\tau)$. Meanwhile, in this
  pass, each set added to $\Sol$ includes at least~$\tau$ elements
  of~$D$, so the pass adds at most $|D|/\tau \le \rho|\Opt|$ sets to
  $\Sol$.
\end{proof}

\begin{lemma} \label{long_lem:ub-basic}
  \Cref{long_alg:basic} is a $p$-pass semi-streaming $pn^{1/p}$-approximation
  algorithm for $\setcover_{n,m}$.
\end{lemma}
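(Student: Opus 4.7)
The plan is to verify, in turn, the three claims implicit in the statement: feasibility, approximation ratio, and space usage. With the thresholds $\tau_j = n^{1-j/p}$, the ratio between consecutive thresholds is exactly $n^{1/p}$, and I would exploit this with \Cref{long_lem:bdd-pass} applied pass by pass.

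For the approximation, I would first establish a simple monotonicity observation: at the end of any call to \proc{GreedyPass} with threshold $\tau$, every set~$S$ in the stream satisfies $|S \setminus C| < \tau$ where~$C$ is the \emph{final} covered set of that pass. Indeed, if~$S$ was added, then $S \subseteq C$; otherwise, when~$S$ was scanned we had $|S \setminus C'| < \tau$ for the then-current covered set $C' \subseteq C$, and contributions only shrink as $C$ grows. Using this, I would argue that for each $j \ge 2$, the $j$-th pass begins with every set having contribution strictly less than $\tau_{j-1}$, so it is a $(\tau_j, n^{1/p})$-bounded pass in the sense of \Cref{long_def:bdd-pass}. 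The very first pass is trivially $(\tau_1, n^{1/p})$-bounded since no set has size exceeding $n = n^{1/p} \cdot \tau_1$. Invoking \Cref{long_lem:bdd-pass} on each of the $p$ passes then yields at most $p \cdot n^{1/p} |\Opt|$ sets in $\Sol$.

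For feasibility, I would observe that $\tau_p = n^{1-p/p} = 1$. Applying the same monotonicity observation to the final pass: every set~$S$ in the stream satisfies $|S \setminus \Cv| < 1$ at the end, i.e.\ $S \subseteq \Cv$ where $\Cv$ denotes the final covered set. Since the instance guarantees $\bigcup_{S \in \cF} S = [n]$, this forces $\Cv = [n]$, so \Cv\ is indeed a certificate of total coverage and $\Sol$ is feasible. Combining with the preceding paragraph gives the $p n^{1/p}$-approximation guarantee.

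Finally, for the space bound, the algorithm's state between tokens consists only of the array $\Cov[1\ldots n]$ and the set $\Sol$, each storing IDs in $[m]$, hence $O(n \log m)$ bits; per-token work uses an additional $O(|S|\log m)$ bits of transient storage, which is within the semi-streaming budget. No step is really the main obstacle here — the cleanest presentation just makes the monotonicity observation explicit once and then invokes \Cref{long_lem:bdd-pass} mechanically; the only subtlety to flag is that contributions computed mid-pass may differ from end-of-pass contributions, which is precisely what the monotonicity observation handles.
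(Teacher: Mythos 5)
Your proposal is correct and follows essentially the same route as the paper: show each pass $j$ is $(n^{1-j/p}, n^{1/p})$-bounded (trivially for $j=1$, and for $j\ge 2$ because contributions can only shrink as the covered set grows after a set is scanned) and then apply \Cref{long_lem:bdd-pass} to each of the $p$ passes. Your explicit monotonicity observation and the feasibility argument via $\tau_p=1$ merely spell out steps the paper treats as immediate.
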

\squeeze{\squeezelen}
\begin{proof}
  The algorithm's correctness and $\tO(n)$ space bound are obvious, so
  we focus on the approximation ratio.  We claim that for each
  $j\in[p]$, the $j$th pass of \Cref{long_alg:basic} is $(n^{1-j/p},
  n^{1/p})$-bounded. 

  Let us prove this claim. Put $\tau_j = n^{1-j/p}$. For $j=1$, the
  precondition required by \Cref{long_def:bdd-pass} is trivially satisfied.
  For larger~$j$, consider an arbitrary set~$S$ in~$\sigma$ and
  let~$C_0$ be as in \Cref{long_def:bdd-pass}, for the $j$th pass. If~$S$ were
  added to $\Sol$ in an earlier pass, then $|S \setminus C_0| = 0$. If
  not, then by the logic of $\proc{GreedyPass}$, set~$S$'s contribution
  was less than $\tau_{j-1}$ during the $(j-1)$th pass. Since~$C_0$ is a
  superset of the set of elements that had been covered when~$S$ was
  processed in the $(j-1)$th pass, we have $|S \setminus C_0| <
  \tau_{j-1} = n^{1/p}\tau_j$.

  Having proved the claim, it follows from \Cref{long_lem:bdd-pass} that each
  pass adds at most $n^{1/p}|\Opt|$ sets to $\Sol$. Therefore, in the
  end we have $|\Sol| \le pn^{1/p}|\Opt|$, as required.
\end{proof}

\noindent
In fact, since the first pass adds at most $n^{1/p}$ sets, we have
$|\Sol| \le n^{1/p}(1+(p-1)|\Opt|)$.


\paragraph{Folding the last two passes.}

The final pass of \Cref{long_alg:basic} picks a set merely for making a {\em
nonzero} contribution. When there are at least two passes,
this final-pass logic can be ``folded into'' the penultimate pass as follows.
During the $p$th pass of a $p+1$-pass scheme, we run
\proc{GreedyPass} as usual and additionally, in parallel, run a second
instance of \proc{GreedyPass} with threshold~$1$ that builds an
alternate solution $\Alt$ (certified by a new array $\Bac$,
analogous to $\Cov$), starting from $\varnothing$.
Thus, $\Alt$ is the solution that a $1$-pass version of \Cref{long_alg:basic}
would have built. At the end of the penultimate ($p$th) pass, $\Sol$ might have
left some elements of~$\cX$ uncovered. We fix this by post-processing:
for each such element~$x$, we add to $\Sol$ the set in $\Alt$ that
covered~$x$; this information can be read from $\Bac$.%
\shortonly{

It is simple to show that the set $\Sol$ so computed is identical to
what \Cref{long_alg:basic} would have output after~$p+1$ passes. We have
therefore computed a $(p+1)n^{1/(p+1)}$-approximation using only~$p$ passes.
Equivalently, we have the following theorem (for pseudocode, see
\Cref{long_alg:folded}).
}%
\longonly{
\Cref{long_alg:folded} implements this very idea.


\algrenewcommand\algorithmicforall{\textbf{foreach}}
\begin{algorithm}[!ht]
  \caption{~~Progressive greedy algorithm for \setcover in~$p$ passes \label{long_alg:folded}}
  \begin{algorithmic}[1]
    \Procedure{ProgGreedy}{stream $\sigma$, integer $n$, integer $p \ge 1$}
    \State $\Cov[1\ldots n], \Bac[1\ldots n] \gets 0^n$;~ $\Sol, \Alt \gets \varnothing$
    \For{$j = 1 ~\textbf{to}~ p-1$}
      \Call{GreedyPass}{$\sigma, n^{1-j/(p+1)}, \Sol, \Cov$}
    \EndFor \vspace{-2pt}
    \State in parallel, 
      do \Call{GreedyPass}{$\sigma, n^{1-p/(p+1)}, \Sol, \Cov$}
      and \Call{GreedyPass}{$\sigma, 1, \Alt, \Bac$}
    \Statex
    \For{$x = 1 ~\textbf{to}~ n$} \label{long_line:postproc-begin}
      \Comment{Post-processing: elements not covered by $Sol$ will get covered by sets from $\Alt$}
      \If{$\Cov[x] = 0$}
        \State $\Sol \gets \Sol \cup \Bac[x]$
        \State $\Cov[x] \gets \Bac[x]$
      \EndIf
    \EndFor \label{long_line:postproc-end}
    \State \textbf{output} $\Sol,\Cov$
    \EndProcedure
  \end{algorithmic}
\end{algorithm}

\begin{lemma} \label{long_lem:folded}
  For every stream $\sigma$,
  the output of $\proc{ProgGreedy}(\sigma,p)$ in
  \Cref{long_alg:folded} is identical to that of
  $\proc{ProgGreedyNaive}(\sigma,p+1)$ in \Cref{long_alg:basic}.
\end{lemma}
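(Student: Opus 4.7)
The plan is to match the two algorithms stage by stage, showing that after each matched stage their $\Sol$ and $\Cov$ variables hold identical values. The first $p-1$ loop iterations are literally identical in both algorithms, each invoking $\proc{GreedyPass}(\sigma, n^{1-j/(p+1)}, \Sol, \Cov)$ for $j = 1, \ldots, p-1$. The $p$-th pass of \Cref{long_alg:folded} then runs $\proc{GreedyPass}(\sigma, n^{1-p/(p+1)}, \Sol, \Cov)$ in parallel with a second invocation $\proc{GreedyPass}(\sigma, 1, \Alt, \Bac)$; because the two invocations touch disjoint variables, their combined effect on $(\Sol, \Cov)$ is exactly that of the $p$-th pass of $\proc{ProgGreedyNaive}(\sigma, p+1)$. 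Hence at the end of the $p$-th pass $(\Sol, \Cov)$ already agrees across the two algorithms, and the remaining task is to compare the post-processing loop in \Cref{long_alg:folded} with the $(p+1)$-th pass of \Cref{long_alg:basic}, which is $\proc{GreedyPass}(\sigma, 1, \Sol, \Cov)$.

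The key structural observation I will establish is: for every $x \in [n]$, $\Bac[x]$ equals the ID of the earliest set in $\sigma$ that contains $x$. The argument is a one-line induction along the stream. When the threshold-$1$ $\proc{GreedyPass}$ processes a set $S$ of ID $i$, if some $x \in S$ still has $\Bac[x] = 0$, then $|S \setminus C| \ge 1$, the threshold-$1$ test passes, and $\Bac[x]$ is set to $i$; thereafter $x$ is covered in $\Alt$ and $\Bac[x]$ is frozen. So $\Bac[x]$ records the ID of the very first set in $\sigma$ containing $x$.

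Now let $U = \{x : \Cov[x] = 0\}$ be the set of elements uncovered after the first $p$ passes, which is identical in both algorithms. The post-processing loop of \Cref{long_alg:folded} adds $\{\Bac[x] : x \in U\}$ to $\Sol$ and sets $\Cov[x] \gets \Bac[x]$ for each $x \in U$. I will show that the $(p+1)$-th pass of \proc{ProgGreedyNaive} performs the same modifications. Two facts suffice: (a) a set $S$ of ID $i$ is added in the $(p+1)$-th pass iff $S$ is the earliest set in $\sigma$ containing some element of $U$; and (b) whenever such an $S$ is processed, for every $x \in S \cap U$ still uncovered at that moment, $S$ is the earliest set in $\sigma$ containing $x$, so $\Cov[x]$ is written to $i = \Bac[x]$. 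Both facts rest on the same minimal-index argument: if $x \in U$ is still uncovered when $S$ is encountered in the $(p+1)$-th pass, then no earlier set in $\sigma$ can contain $x$, because any such earlier set would itself have been added in the $(p+1)$-th pass (its threshold-$1$ test would have passed on $x$) and would already have covered $x$.

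I expect the only real subtlety to lie in the element-by-element matching of $\Cov$. The $(p+1)$-th pass of \proc{ProgGreedyNaive} may, when processing $S$, write $\Cov$ at several indices of $S \cap U$ simultaneously, whereas the post-processing loop in \Cref{long_alg:folded} visits indices one at a time in the order $x = 1, \ldots, n$. Still, observation~(b) pins $\Bac[x'] = i$ for every such $x'$, and the post-processing loop only writes $\Cov[x]$ at the current $x$ (never back-touching other indices), so the element-by-element equality of the final $\Cov$ arrays falls out. Packaging these pieces with the fact that both procedures add exactly the set $\{\Bac[x] : x \in U\}$ to $\Sol$ yields that the outputs of the two algorithms are identical.
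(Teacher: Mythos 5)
Your proof is correct and follows essentially the same route as the paper's: both reduce the claim to the observation that a threshold-$1$ pass (and hence both the $\Bac$ array and the final pass of the na\"ive algorithm) assigns to each still-uncovered element the earliest set in $\sigma$ containing it. Your version just spells out the minimal-index argument that the paper states as "immediate."
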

\squeeze{\squeezelen}
\begin{proof}
  Fix an input stream~$\sigma$. Let~$\cA_1$ and~$\cA_2$ denote, respectively,
  the invocation of \Cref{long_alg:basic} as
  $\proc{ProgGreedyNaive}(\sigma,p+1)$ and the invocation of
  \Cref{long_alg:folded} as $\proc{ProgGreedy}(\sigma,p)$.  Let $\Sol_1$ be
  the value of $\Sol$ after~$p$ passes of~$\cA_1$.  It is immediate that
  $\Sol_1$ is also the value of $\Sol$ in~$\cA_2$ just before the
  post-processing loop in
  \crefrange{long_line:postproc-begin}{long_line:postproc-end}.

  Let $\Cv_1$ and $\Cv_2$ denote, respectively, the final {\em output}
  values of the array $\Cov$ in~$\cA_1$ and~$\cA_2$.  Let $C =
  \bigcup_{S \in \Sol_1} S$. Our above observation says that $\Cv_1[x] =
  \Cv_2[x]$ for all $x \in C$. It remains to prove that the same
  equality also holds for all $x \in [n] \setminus C$.  But this, too,
  is immediate from the observation that for each $x \in [n] \setminus
  C$, each of $\Cv_1[x]$ and $\Bac[x]$, and thus $\Cv_2[x]$ as well, is
  set to the earliest set in~$\sigma$ that contains~$x$.
\end{proof}
}

\begin{theorem} \label{long_thm:ub}
  There is a $p$-pass, $O(n\log m)$-space algorithm that, for every instance
  of $\setcover_{n,m}$, outputs a feasible solution $\Sol$ with $|\Sol|
  \le n^{1/(p+1)}(1 + p|\Opt|) \le (p+1)n^{1/(p+1)} |\Opt|$.
  \shortonly{\qed}
\end{theorem}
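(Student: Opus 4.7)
The plan is to combine the pieces already assembled: run \Cref{long_alg:folded} with parameter $p$, invoke \Cref{long_lem:folded} to reduce the analysis to that of the $(p+1)$-pass naive version, and then adapt the proof of \Cref{long_lem:ub-basic} with one minor sharpening for the first pass.

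First I would note that \Cref{long_alg:folded} makes exactly $p$ passes and uses two arrays $\Cov[1..n]$ and $\Bac[1..n]$ with entries in $\{0,1,\ldots,m\}$, plus the implicit solution/alternate-solution sets, all representable in $O(n\log m)$ bits. By \Cref{long_lem:folded}, its output $(\Sol,\Cov)$ is identical to what $\proc{ProgGreedyNaive}(\sigma, n, p+1)$ would produce. So it suffices to bound $|\Sol|$ for the $(p+1)$-pass naive version, which uses thresholds $\tau_j := n^{1-j/(p+1)}$ for $j = 1, 2, \ldots, p+1$.

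Next I would reprove the key claim of \Cref{long_lem:ub-basic} with $p$ replaced by $p+1$: for each $j \in [p+1]$, the $j$th pass is $(\tau_j, n^{1/(p+1)})$-bounded in the sense of \Cref{long_def:bdd-pass}. The case $j = 1$ is vacuous. For $j \ge 2$, any set $S$ not added in an earlier pass had contribution strictly less than $\tau_{j-1} = n^{1/(p+1)}\tau_j$ when encountered in the $(j-1)$th pass; since $C_0$ at the start of the $j$th pass only grows, $|S\setminus C_0| < n^{1/(p+1)}\tau_j$ still holds. \Cref{long_lem:bdd-pass} then gives that each such pass contributes at most $n^{1/(p+1)}|\Opt|$ sets to $\Sol$.

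The sharpening, noted in the remark following \Cref{long_lem:ub-basic}, is to bound the first pass independently: its threshold is $\tau_1 = n^{1-1/(p+1)}$, and since every added set covers at least $\tau_1$ previously uncovered elements drawn from a universe of size $n$, at most $n/\tau_1 = n^{1/(p+1)}$ sets can be added in the first pass, regardless of $|\Opt|$. Summing the first-pass bound of $n^{1/(p+1)}$ with the $p$ remaining passes' bounds of $n^{1/(p+1)}|\Opt|$ each yields $|\Sol| \le n^{1/(p+1)}(1+p|\Opt|)$, and since $|\Opt| \ge 1$ this is at most $(p+1)n^{1/(p+1)}|\Opt|$, as claimed.

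There isn't really a main obstacle here, since all the real content lives in the preceding lemmas; the only thing to get right is the bookkeeping — using the threshold schedule $n^{1-j/(p+1)}$ rather than $n^{1-j/p}$, confirming that the first-pass counting argument (which avoids paying a factor of $|\Opt|$) still applies after folding, and that the $O(n\log m)$ space bound absorbs both the $\Cov$ and $\Bac$ arrays maintained in parallel during the folded pass.
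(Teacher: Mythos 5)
Your proposal is correct and follows exactly the paper's route: the stated proof is precisely the combination of \Cref{long_lem:folded} (to reduce to the $(p+1)$-pass naive algorithm) with \Cref{long_lem:ub-basic} and the first-pass sharpening remarked after it. The bookkeeping you carry out (thresholds $n^{1-j/(p+1)}$, the $n/\tau_1 = n^{1/(p+1)}$ bound on the first pass, and the $O(n\log m)$ space for both $\Cov$ and $\Bac$) matches the paper's intended argument.
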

\longonly{
\squeeze{\squeezelen}
\begin{proof}
  This follows immediately by combining \Cref{long_lem:ub-basic} with
  \Cref{long_lem:folded}.
\end{proof}
}

\paragraph{Folding three passes?}
It is natural to wonder whether the above ``folding'' idea can be taken
further, achieving an even better pass/approximation tradeoff. As it
turns out, we cannot fold (the last) three passes into one. The most
convincing proof is the lower bound that we shall establish in
\Cref{long_sec:lb}. 

\longonly{
As designed, the algorithm cannot be sure what the
contribution of a set will be in a particular pass until it actually sees this
set in that pass.
In the last pass, however, we need only know that the contribution is non-zero:
after the penultimate pass, if $\Cov[x]=0$ and $\Bac[x]=i$, we know ``in
advance'' that set~$S_i$ has non-zero contribution.%
}

\longonly{\subsection{Tightness of Analysis} \label{long_sec:hardstream}}

\shortonly{\paragraph{Tightness of analysis.}}
The lower bound in \Cref{long_sec:lb} shows that the approximation ratio
guaranteed by \Cref{long_thm:ub} is asymptotically optimal for $p =
\Theta(1)$ passes. But if~$p$ is allowed to grow with~$n$, then there remains
a small $\Theta(p^3)$ discrepancy between that upper bound and the lower bound
we shall eventually prove in \Cref{long_thm:lb}. 
\longonly{
We can, however, prove that our
analysis of the approximation guarantee of \Cref{long_alg:basic} is tight.
}
\shortonly{
We can, however, show---as is done by the following theorem, proved in
\Cref{long_sec:hardstream}---that our analysis of the approximation
guarantee of \Cref{long_alg:basic} is tight.
}

\begin{theorem} \label{long_thm:anal-tight}
  For each integer $p \ge 2$ and~$q$ large enough, there is an instance
  $\cI_{n,m}$ of $\setcover_{n,m}$, with $n = q^p - 1$ and $m \le pq$,
  such that $\cI_{n,m}$ admits a set cover of size~$1$, whereas
  \Cref{long_alg:basic}, using~$p$ passes and running on $\cI_{n,m}$, returns
  a solution with $p(q-1) \approx pn^{1/p}$ sets.
\end{theorem}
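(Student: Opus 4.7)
The plan is to build a nested ``decoy'' instance that repeatedly traps the progressive-greedy algorithm into wasteful choices. Identify the universe $\cX$ with the $q^p - 1$ leaves of a complete $q$-ary tree of depth $p$ from which one distinguished leaf $\ell^*$ has been removed, and let $v_0, v_1, \ldots, v_p = \ell^*$ denote the ancestors of $\ell^*$. The family $\cF$ consists of the big set $B = \cX$ together with, for each $j \in [p]$, a layer of $q-1$ \emph{pass-$j$ decoys}: one per depth-$j$ child of $v_{j-1}$ other than $v_j$, taken as the $q^{p-j}$-leaf set of the subtree rooted at that child. By construction, pass-$j$ decoys are mutually disjoint, lie entirely in the subtree of $v_{j-1}$ but outside the subtree of $v_j$, and are thus disjoint from all pass-$j'$ decoys with $j' \neq j$. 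Hence $|\cF| = p(q-1) + 1 \le pq$, the optimum is trivially $\{B\}$ with $|\Opt| = 1$, and the stream is ordered with every decoy appearing before $B$.

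I would then show by induction on $j$ that pass $j$ of \Cref{long_alg:basic} adds exactly the $q-1$ pass-$j$ decoys to $\Sol$. Writing $\tau_j = (q^p - 1)^{1 - j/p}$, three facts suffice: (a) each pass-$j$ decoy has contribution exactly $q^{p-j} \ge \tau_j$, trivially since $n < q^p$; (b) any pass-$j'$ decoy with $j' > j$ has contribution at most $q^{p-j-1}$, which is strictly below $\tau_j$; and (c) once $B$ is finally reached in pass $j$, the decoys selected across passes $1, \ldots, j$ have covered precisely $(q-1) \sum_{i=1}^{j} q^{p-i} = q^p - q^{p-j}$ elements, so $B$'s contribution drops to $q^{p-j} - 1 < \tau_j$. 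The disjointness of decoys across and within passes makes every ``contribution'' above exact, not merely an upper bound. Telescoping gives $|\Sol| = p(q-1)$ at termination, and the identity $(q-1) \sum_{j=1}^{p} q^{p-j} = q^p - 1 = n$ confirms that $\Sol$ is a feasible cover.

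The principal technical obstacle is verifying the inequalities in (b) and (c); both reduce to the single lower bound $\tau_j \ge q^{p-j} - q^{-j}$, which I would derive from Bernoulli's inequality $(1-x)^a \ge 1 - ax$ applied with $x = q^{-p}$ and $a = 1 - j/p$:
\[
\tau_j \;=\; q^{p-j}(1 - q^{-p})^{1 - j/p} \;\ge\; q^{p-j} - (1 - j/p)\, q^{-j} \;\ge\; q^{p-j} - q^{-j}.
\]
For any $q \ge 2$ this yields $\tau_j > q^{p-j} - 1$ directly, and via $q^{p-1}(q-1) > 1$ it also yields $\tau_j > q^{p-j-1}$, so the ``$q$ large enough'' hypothesis of the statement is comfortably met. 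The remaining ingredients — the disjointness of the decoy layers, the fact that pass-$j$ decoys are completely uncovered at the start of pass $j$, and the telescoping count — are immediate from the tree construction.
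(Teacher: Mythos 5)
Your construction is the paper's construction in different clothing: the leaves of your complete $q$-ary tree are the points of $[q]^p$, the deleted leaf is $(q,\ldots,q)$, and your pass-$j$ decoys are precisely the sets $S^y_j$ of the paper's proof; the stream order (all decoys before $B=\cX$) and the pass-by-pass analysis also coincide with the paper's. The one genuine flaw is in your justification of the key estimate $\tau_j > q^{p-j}-1$. You invoke ``Bernoulli's inequality $(1-x)^a \ge 1-ax$,'' but that direction of Bernoulli holds for exponents $a\ge 1$ (or $a\le 0$); for $a=1-j/p\in[0,1)$ the map $t\mapsto t^a$ is concave and the inequality reverses, giving $(1-q^{-p})^{1-j/p}\le 1-(1-j/p)q^{-p}$, i.e.\ $\tau_j \le q^{p-j}-(1-j/p)q^{-j}$ --- an upper bound, which is useless for your purposes. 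Your intermediate claim $\tau_j \ge q^{p-j}-(1-j/p)q^{-j}$ is in fact false (strictly) for every $j<p$.

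Fortunately the weaker bound you actually use, $\tau_j\ge q^{p-j}-q^{-j}$, is true and has an even shorter proof: since $1-q^{-p}\in[0,1]$ and $1-j/p\in[0,1]$, raising a number in $[0,1]$ to a power in $[0,1]$ can only increase it, so $(1-q^{-p})^{1-j/p}\ge 1-q^{-p}$ and hence $\tau_j = q^{p-j}(1-q^{-p})^{1-j/p}\ge q^{p-j}-q^{-j} > q^{p-j}-1$. (The paper obtains the same sandwich $q^{p-j}-1<\tau_j\le q^{p-j}$ via a monotonicity argument on $G_{j,p}(x)=(x^p-1)^{1-j/p}-(x^{p-j}-1)$.) With that repair --- and noting that the inequality $q^{p-j-1}<\tau_j$ needs $q^{p-j-1}(q-1)\ge 1$, not the $q^{p-1}(q-1)>1$ you wrote --- the rest of your argument (exact contributions from disjointness, the telescoping count, and the check that $B$'s contribution $q^{p-j}-1$ stays below $\tau_j$) goes through and matches the paper's proof.
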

\longonly{
\squeeze{\squeezelen}
\begin{proof}
  Put $\cX = [q]^p \setminus \{(q,q,\ldots,q)\}$. For each $j \in [p]$
  and $y \in [q-1]$, define the sets
  \begin{align*}
    \cX_j &= \{(x_1,\ldots,x_p) \in \cX:~
      x_1 = \cdots = x_{j-1} = q\} \,, \\
    S^y_j &= \{(x_1,\ldots,x_p) \in \cX:~
      x_1 = \cdots = x_{j-1} = q
      \,\wedge\, x_j = y\} \,.
  \end{align*}
  Then $\cX = \cX_1 \supseteq \cdots \supseteq \cX_p$ and $S^y_j
  \subseteq \cX_j$.  Observe that these sets~$S^y_j$ are pairwise
  disjoint and partition~$\cX$. Further, $|S^y_j| = q^{p-j}$ and
  $|\cX_j| = q^{p-j+1}-1$ for all $j,y$.

  Let~$\sigma_j$ be the stream consisting of the sets $\{S^y_j:\,
  y\in[q-1]\}$ in some arbitrary order. Let~$\sigma$ be the stream
  consisting of $\sigma_p$ followed by $\sigma_{p-1}$ and so on, down to
  $\sigma_1$, and finally the set~$\cX$. Consider the $\setcover_{n,m}$
  instance $\cI_{n,m}$ defined by $\sigma$: it satisfies $n = |\cX| =
  q^p-1$ and~$m=p(q-1)+1 \leq pq$, as claimed.  Since the entire
  universe~$\cX$ occurs as a set in~$\cI_{n,m}$, the optimum set cover
  consists of just that one set.

  Now consider the behaviour of \Cref{long_alg:basic} on~$\sigma$. For each
  $j\in[p]$, let $\tau_j = n^{1-j/p}$ be the threshold in the $j$th
  pass. We claim that
  \begin{equation} \label{long_eq:threshold-sandwich}
    q^{p-j}-1 < \tau_j \le q^{p-j} \,.
  \end{equation}
  The second inequality in~\eqref{long_eq:threshold-sandwich} is easy to see:
  $\tau_j = \left((q^p-1)^{1/p}\right)^{p-j} \le q^{p-j}$. The first
  inequality is obvious when $j = p$, so suppose that $1 \le j < p$.
  Consider the function
  \[
    G_{j,p}(x) = (x^p-1)^{1-j/p} - (x^{p-j}-1) \,.
  \]
  A routine calculation shows that the derivative $G'_{j,p}(x) =
  (p-j)x^{p-1}((x^p-1)^{-j/p} - x^{-j})$. For $x \ge 1$, we have
  $(x^p-1)^{1/p} < x$, so $(x^p-1)^{-j/p} > x^{-j}$; therefore
  $G'_{j,p}(x) > 0$. Since $G_{j,p}(1) = 0$, we now conclude that
  $G_{j,p}(q) > 0$, which gives us the first inequality
  in~\eqref{long_eq:threshold-sandwich} and proves the claim.
  We can now see that the $j^{\mathrm{th}}$ pass satisfies the following 
  properties.
  \begin{enumerate}[itemsep=1pt]
    \item At the start of the pass, the set of uncovered elements is
    precisely~$\cX_j$.
    \item Each set in $\sigma_p,\ldots,\sigma_{j+1}$ makes a
    contribution equal to its cardinality. Therefore the largest such
    contribution is $q^{p-j-1} < q^{p-j}-1 < \tau_j$,
    by~\eqref{long_eq:threshold-sandwich}.
    \item Each set in $\sigma_j$ makes a contribution equal to its
    cardinality, which is $q^{p-j} \ge \tau_j$,
    by~\eqref{long_eq:threshold-sandwich}.
    \item Each set in $\sigma_{j-1},\ldots,\sigma_1$ makes a
    contribution of zero.
    \item The set $\cX$, which arrives at the end of~$\sigma$, makes a
    contribution of $q^{p-j}-1 < \tau_j$,
    by~\eqref{long_eq:threshold-sandwich}.
    \item Therefore the sets added to $\Sol$ during the pass are exactly
    the sets in~$\sigma_j$.
  \end{enumerate}
  The validity of these properties can be formally proved by backward
  induction on~$j$. The details are routine and tedious, so we omit
  them.

  Based on the above properties, we see that \Cref{long_alg:basic} produces a
  solution consisting of all sets in all substreams~$\sigma_j$. The
  number of sets in this solution is $p(q-1)$, as claimed.
\end{proof}

}


\section{The Basic Lower Bound} \label{long_sec:lb}

In this section we establish our main result, which gives a strong lower
bound on the best approximation ratio achievable by a semi-streaming
algorithm for \setcover. Our lower bound gives the optimal dependence of
this ratio on~$n$. Moreover, for~$p$ passes, our lower bound is only
about~$p^3$ times smaller than our upper bound in \Cref{long_thm:ub}. In
particular, when $p = \Theta(1)$, the lower bound is asymptotically
optimal.

\subsection{Warm Up: One-Pass Algorithms}

Our proof is based on a fairly technical combinatorial construction. To
motivate it, let us first outline a simple proof of a {\em one-pass}
lower bound. We start with the well-known \textsc{index} (or \idx)
problem in communication complexity, where Alice must send Bob a
(possibly random) message about her $n$-bit string~$x$, so that Bob, who
holds an index $h \in [n]$, can output~$x_h$ (the $h$th bit of~$x$) with
high probability. A textbook result~\cite{KushilevitzNisan-book} is that
this requires Alice to send $\Omega(n)$ bits. To reduce \idx to
\setcover, we construct a universe~$\cX$ and a family of~$n$ distinct
sets $S_1, \ldots, S_n \subset \cX$.  Alice encodes~$x$ as the stream of
sets $\{S_i:\, x_i = 1\}$, and Bob encodes~$h$ as a ``stream'' of just
one set: $\cX \setminus S_h$.  Alice's stream followed by Bob's is an
instance of \setcover.

When $x_h = 1$, this instance clearly has $|\Opt| = 2$. We can force
$|\Opt|$ to be much larger when $x_h = 0$ if we make each $|S_i|$ large
and each $|S_i \cap S_j|$ small (for $i \ne j$): since Alice's stream is
missing~$S_h$, it will take ``many'' sets~$S_i$, $i \ne h$, to cover the
elements of~$S_h$.

Incidence geometry gives us an elegant construction of a collection
$\{S_1,\ldots,S_n\}$ with these properties. Consider the lines of
an affine plane of order~$q$, with~$q$ a prime power. More explicitly,
let $\FF_q$ denote the finite field with~$q$ elements, $\cX = \FF_q^2$,
$n = |\cX| = q^2$, and $\{S_1,\ldots,S_n\}$ be some collection of~$n$
distinct lines
out of the $q^2 + q$ such lines in $\FF_q^2$.  Then each $|S_i| = q$ and
each $|S_i \cap S_j| \le 1$, for $i \ne j$.  In particular, $x_h = 0$
now implies that $|\Opt| \ge q = \sqrt n$.  Therefore approximating such a
\setcover instance to a factor smaller than $\sqrt n/2$ is enough to
solve \idx, whence an algorithm achieving such approximation
must use~$\Omega(n)$ space. 

To rule out a semi-streaming algorithm we must prove a stronger,
$n^{1+\Omega(1)}$ space, lower bound. A simple tweak achieves this:
sticking with the universe $\FF_q^2$, replace the lines in the above
construction with degree-$2$ algebraic curves, say. This preserves the
essential dichotomy between large~$|S_i|$ and small $|S_i \cap S_j|$
while allowing us to reduce from an \idx instance on $n^{1+\Omega(1)}$
bits.

The one-pass lower bound proof we have just outlined is arguably more
straightforward than the Emek--Ros\'en proof~\cite{EmekR14}. 
Though both
proofs begin with the affine plane,
our builds an explicit set system, rather than relying on
a probabilistic argument,
and reduces directly from \idx, rather than a employing bespoke
entropy calculations, leading to a more modular proof.
%
But there is a far more
important takeaway from our proof: the observation that employing
higher-degree curves adds great flexibility to the construction.
Exploiting this observation to its fullest allows us to handle
multi-pass algorithms by greatly generalising the construction, moving
from affine planes to more abstract incidence geometries that we call
{\em edifices} (\Cref{long_def:edifice} below). Edifices, like affine planes,
are examples of Buekenhout geometries~\cite{Buekenhout79}.

\subsection{Multi-Player Tree Pointer Jumping}

A popular source problem for multi-pass streaming lower bounds is the
communication problem {\em multi-player tree pointer jumping}, which
generalises \idx. Let $T$ be a rooted tree with $k\ge 2$ layers of
vertices, where a vertex is in layer~$\ell$ if it is at distance exactly
$k - \ell$ from the root (thus, the root is in layer~$k$) and every leaf
is in layer~$1$. The pointer jumping problem on~$T$, denoted $\mpj_T$,
is a $k$-player number-in-hand communication game involving players
named $\plr_1,\ldots,\plr_k$. For $1 < j \le k$, $\plr_j$'s input
specifies one {\em pointer} (i.e., out-edge) at each vertex in layer~$j$;
by definition each such pointer leads to a vertex in layer $j-1$.
Furthermore, $\plr_1$'s input specifies a bit at each layer-$1$ vertex;
these bits are called {\em leaf bits}.
Given such an input, $\pi$, let
$T|_\pi$ denote the subgraph of~$T$ defined by retaining only those
edges of~$T$ that correspond to pointers in~$\pi$.
Then~$T|_\pi$
contains a unique root-to-leaf path, ending at a leaf $v_\pi$, say.
The desired output corresponding to~$\pi$, denoted $\mpj_T(\pi)$, is
defined to be the leaf bit at~$v_\pi$. 

The communication game involves players announcing messages on a shared
broadcast channel, according to a public-coin randomised protocol. The
protocol proceeds in {\em rounds}, where a round is defined as one
message each from $\plr_1,\ldots,\plr_k$, speaking in that order. The
last message of the protocol must be a single bit, which is defined to
be the protocol's output. An $[r,C,\eps]$-protocol for $\mpj_T$ is
defined to be one in which
\shortonly{
(i) there are at most~$r$ rounds of communication;
(ii) within each round, the total number of bits communicated is at most~$C$; and
(iii) with probability at least $1-\eps$, the protocol's output equals
$\mpj_T(\pi)$.
}
\longonly{
\begin{itemize}[itemsep=1pt]
\item there are at most~$r$ rounds of communication;
\item within each round, the total number of bits communicated is at
most~$C$; and
\item the protocol's output equals $\mpj_T(\pi)$ with probability at
least $1-\eps$.
\end{itemize}
}

\begin{definition}
  The $r$-round randomised communication complexity of $\mpj_T$ is
  defined to be $\R^r(\mpj_T) := \min\{C:$ there exists an
  $[r,C,\frac13]$-protocol for $\mpj_T\}$.
\end{definition}

%
\longonly{
Intuitively, if players trying to solve $\mpj_T$ are restricted to a
``small'' amount of communication per round, then because they are
forced to speak in the ``wrong'' order, in the first round the only
player who is able to convey ``useful'' information is $\plr_k$, in the
second round the only such player is $\plr_{k-1}$, and so on.
Therefore, if the protocol is further restricted to $k-1$ rounds,
$\plr_1$ rarely gets a chance to convey useful information and so the
protocol's error probability should be high. This intuition was
formalised in the {\em round elimination} ideas of Miltersen
\etal~\cite{MiltersenNSW98}. Using these ideas and a {\em direct sum}
argument, Chakrabarti, Cormode, and McGregor~\cite{ChakrabartiCM08}
proved a distributional communication complexity lower bound for \mpj.
We only need the consequent randomised communication complexity bound,
stated below.
}

\shortonly{\begin{theorem}[{Chakrabarti, Cormode, and McGregor~\cite[Theorem 4.5]{ChakrabartiCM08}}]}
\longonly{\begin{theorem}[{\cite[Theorem 4.5]{ChakrabartiCM08}}]}
\label{long_thm:mpj}
  Let~$T$ be a complete $t$-ary tree with $k \ge 2$ layers of vertices.
  Then $\R^{k-1}(\mpj_T) = \Omega(t/k^2)$. \qed
\end{theorem}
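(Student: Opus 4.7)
The plan is to prove this bound via a round-elimination argument combined with a direct-sum reduction to \idx, broadly following the information-complexity approach of~\cite{ChakrabartiCM08}. First, fix a hard product distribution $\mu$ on inputs: sample each pointer independently and uniformly from its $t$ possible children, and each leaf bit independently and uniformly from $\{0,1\}$. It suffices to lower-bound the distributional complexity against $\mu$, which in turn reduces to lower-bounding the per-round \emph{information cost}---the mutual information between each round's transcript and the senders' inputs, conditioned on public randomness and previous messages---under $\mu$.

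Next, I would establish a round-elimination lemma of the following shape: any $[r, C, \eps]$-protocol for $\mpj_T$ on the $k$-layer $t$-ary tree $T$ yields an $[r-1, C, \eps + O(1/k)]$-protocol for $\mpj_{T'}$ on a $(k-1)$-layer $t$-ary tree $T'$. The key observation is that $\plr_1$'s first message is sent before $\plr_1$ has received anything from $\plr_2, \ldots, \plr_k$, so it depends only on the leaf bits; this round-$1$ message from $\plr_1$ can be replaced, at small cost, by a sample from its marginal distribution, thereby eliminating one of $\plr_1$'s rounds. Combine this with a direct sum over subtrees: hard-code $\plr_k$'s pointer at the root to a uniformly random child $c^\star$ (drawn from public randomness) and restrict attention to the $c^\star$-subtree; averaging shows that the $C$ bits $\plr_k$ transmits in any given round contribute only $O(C/t)$ ``relevant'' bits about any single child-subtree.

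Iterating the reduction $k-2$ times reduces $\mpj_T$ on the $k$-layer tree to $\mpj$ on a two-layer $t$-ary tree, which is exactly $\idx_t$ on a universe of size~$t$. The textbook $\Omega(t)$ one-way randomised lower bound for \idx supplies the base case. In the information-cost framework, the direct-sum step contributes a factor of~$t$ in the subtree-level cost, and the round elimination contributes factors of $k$ (one for the number of rounds and one for the error amplification), yielding an overall bound of $C = \Omega(t/k^2)$.

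The hardest part will be formalising the round-elimination step: one must track $\plr_k$'s participation in \emph{all} later rounds, not only round~$1$, verify that conditioning on sampled messages does not distort the posterior distribution of the $c^\star$-subtree by more than $O(1/k)$ in total variation (via Pinsker applied to the $O(C/t)$ average information leaked per subtree), and ensure that the reduced sub-instance still has the product structure needed to iterate. The information-theoretic route taken in~\cite{ChakrabartiCM08}---bounding per-round conditional information directly under $\mu$ and invoking a direct-sum lemma for the $t$ subtrees---handles these issues cleanly and delivers the claimed $\Omega(t/k^2)$ bound.
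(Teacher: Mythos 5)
The paper does not prove this statement: it is imported as is from Theorem~4.5 of Chakrabarti, Cormode, and McGregor~\cite{ChakrabartiCM08}, and the terminal \textup{\qedsymbol} marks a citation rather than an argument. So there is no in-paper proof to compare against; the only meaningful comparison is with the cited source, and your sketch does follow the architecture the paper attributes to it --- round elimination in the style of Miltersen \etal combined with a direct-sum argument, bottoming out at \idx.

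Two substantive caveats. First, your direct-sum averaging is aimed at the wrong player. $\plr_k$'s entire input is the single root pointer; there is nothing about the $t$ subtrees to ``dilute'' in her messages, and once the root pointer is fixed (or announced) she is irrelevant. The players whose first-round messages must be shown to carry only $O(C/t)$ bits on average about the \emph{selected} subtree are $\plr_1,\ldots,\plr_{k-1}$: their inputs are spread across all $t$ subtrees, and they speak \emph{before} learning which subtree the root pointer selects. Relatedly, resampling only $\plr_1$'s first message does not eliminate a round of the $k$-player game; the elimination step must dispose of the entire first round (i.e., the round-one messages of all of $\plr_1,\ldots,\plr_{k-1}$) while embedding a $(k-1)$-layer instance into the chosen subtree, and your stated worry about ``tracking $\plr_k$'s participation in all later rounds'' again points at the one player for whom this is a non-issue. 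Second, the sketch defers exactly the quantitative content that yields $t/k^2$: one needs the error increment per elimination to be $O(\sqrt{C/t})$ (e.g., via Pinsker from the $O(C/t)$ average information), so that keeping the total error bounded away from $1/2$ over $\Theta(k)$ eliminations forces $C=\Omega(t/k^2)$; the phrase ``one factor of $k$ for the number of rounds and one for the error amplification'' is a placeholder for, not a derivation of, that calculation. As a reconstruction of the cited argument's shape the proposal is essentially right, but as a standalone proof it is incomplete and misplaces the key averaging step.
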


\subsection{Reduction to Set Cover via Edifices}

\begin{definition} \label{long_def:edifice}
A {\em $(k,d,q,t)$-edifice} $\cT$ over a universe~$\cX$ is a rooted tree,
together with an associated collection of sets called the {\em
varieties} of the edifice, satisfying the following properties.

\shortonly{
\begin{enumerate}[topsep=2pt,itemsep=0pt,label=(E\arabic*),leftmargin=*,widest={(E5)}]
}
\longonly{
\begin{enumerate}[itemsep=1pt,label=(E\arabic*),leftmargin=*,widest={(E5)}]
}
\item \label{long_ax:arity} $\cT$ is a complete $t$-ary tree, i.e., every
non-leaf vertex has exactly~$t$ children.
\item \label{long_ax:depth} $\cT$ has~$k$ levels (equivalently, depth $k-1$),
numbered~$1$ through~$k$ from leaves to root.
\item \label{long_ax:variety} Each vertex~$v$ of~$\cT$ has an associated set
$X_v \subseteq \cX$, called the {\em variety} at~$v$.
\item \label{long_ax:incidence} If~$u$ is the parent of~$v$, then~$X_u
\supseteq X_v$. If~$r$ is the root of~$\cT$, then $X_r = \cX$.
\item \label{long_ax:sink} If~$z$ is a leaf of~$\cT$, then $|X_z| \ge q$.
\item \label{long_ax:intersection} For each leaf~$z$ of~$\cT$ and each
vertex~$v$ not an ancestor of~$z$, we have $|X_z \cap X_v| \le d+k-1$.
\end{enumerate}
\end{definition}

The $(k,d,q,t)$-edifices that interest us will have $k \approx d \ll q
\ll t$.  In particular, if $d+k \le q$ and $t \ge 2$, it is easy to
prove from \ref{long_ax:incidence}, \ref{long_ax:sink}, and \ref{long_ax:intersection}
that varieties at distinct vertices are distinct (as sets).  For readers
familiar with incidence geometry, we note that these varieties then form
a Buekenhout geometry~\cite{Buekenhout79} of rank~$k$, where the type
map sends each variety to the level of its corresponding vertex and the
incidence relation is symmetrised set inclusion. Thus our notion of an
edifice generalises affine planes, which we used in our warm-up proof:
an affine plane over $\FF_q$ is a $(2,0,q,q^2+q)$-edifice over the
universe $\FF_q^2$.

\begin{theorem} \label{long_thm:reduction}
  Suppose there exists a $(p+1,d,q,t)$-edifice $\cT$ with $q >
  (p+d)(p+1)$.  Then every randomised $p$-pass streaming algorithm that,
  with probability at least $2/3$, approximates $\setcover$ to a factor
  smaller than $q/\left((p+d)(p+1)\right)$ must use at least $\R^p(\mpj_\cT)/p =
  \Omega(t/p^3)$ bits of space.
\end{theorem}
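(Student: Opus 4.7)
The plan is to reduce $\mpj_\cT$ to approximating $\setcover$ in the streaming setting, viewing $\cT$ as the $t$-ary $(p+1)$-layer tree underlying the edifice. Given an input $\pi$ to $\mpj_\cT$, each player $\plr_j$ (with $j \ge 2$) emits, for every layer-$j$ vertex $u$, two kinds of sets: (i) the variety $X_w$ for every child $w$ of $u$ that is \emph{not} selected by $u$'s pointer, and (ii) a ``connector'' set $X_u \setminus X_{c(u)}$, where $c(u)$ is the child that \emph{is} selected by $u$'s pointer. Player $\plr_1$ emits the variety $X_z$ for each leaf $z$ whose bit is $1$. The concatenation of these stream segments is the \setcover instance over universe $\cX$.

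To analyse correctness, trace the unique root-to-leaf path $v_{p+1}, v_p, \ldots, v_1$ consistent with $\pi$, so that $\mpj_\cT(\pi)$ is the leaf bit at $v_1$. If this bit is $1$, then $X_{v_1}$ and the $p$ connectors $X_{v_j} \setminus X_{v_{j-1}}$ (for $j = 2, \ldots, p+1$) are all present; these telescope to $X_{v_{p+1}} = \cX$ and yield a cover of size $p+1$. If the bit is $0$, then $X_{v_1}$ is absent, as are the varieties at all its ancestors $v_2, \ldots, v_{p+1}$ (each being a pointed-to child), and I would argue that every set in the stream meets $X_{v_1}$ in at most $d+p$ elements: a type-(i) variety corresponds to a vertex $w$ off the root-to-leaf path (any pointed-to child on the path is omitted from type (i) by construction), so axiom \ref{long_ax:intersection} gives $|X_w \cap X_{v_1}| \le d + (k-1) = d+p$; a type-(ii) connector $X_u \setminus X_{c(u)}$ is either disjoint from $X_{v_1}$ (when $u = v_j$, since $X_{c(u)} = X_{v_{j-1}} \supseteq X_{v_1}$ by iterated applications of axiom \ref{long_ax:incidence}) or sits inside an off-path $X_u$, again of intersection at most $d+p$; and $\plr_1$'s leaf varieties $X_z$ with $z \ne v_1$ are also off-path. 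Since $|X_{v_1}| \ge q$ by axiom \ref{long_ax:sink}, the optimum cover has size at least $q/(d+p)$.

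Hence any $p$-pass algorithm approximating $\setcover$ to a factor strictly less than $(q/(d+p))/(p+1) = q/((p+d)(p+1))$ decides $\mpj_\cT$ with probability $\ge 2/3$. The standard simulation turns such an algorithm with $s$ bits of memory into a $p$-round protocol for $\mpj_\cT$ in which the $k = p+1$ players relay the algorithm's state around the chain once per round, using $(p+1) s = O(p s)$ bits per round. Combined with \Cref{long_thm:mpj}, which gives $\R^p(\mpj_\cT) = \Omega(t/(p+1)^2) = \Omega(t/p^2)$, this forces $s = \Omega(\R^p(\mpj_\cT)/p) = \Omega(t/p^3)$, matching the theorem.

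The main technical point I expect to need care is the dichotomy in the bad case: the connector trick $X_u \setminus X_{c(u)}$ must serve two conflicting demands, namely to close the cover via the telescope in the good case while never contributing more than $d+p$ elements of $X_{v_1}$ in the bad case. The separation hinges entirely on whether $u$ lies on or off the truthful path, with the on-path case contributing nothing to $X_{v_1}$ thanks to axiom \ref{long_ax:incidence} and the off-path case inheriting the small-intersection bound from axiom \ref{long_ax:intersection}. Carefully checking that every set emitted by every player falls into exactly one of these two regimes, and that the players can indeed produce their segments from their local input only, is where the bookkeeping lies.
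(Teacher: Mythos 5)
Your reduction is essentially the paper's: pointers become the difference sets $X_u\setminus X_{c(u)}$, leaf bits become leaf varieties, the telescoping union handles the $1$-case, \ref{long_ax:sink} and \ref{long_ax:intersection} force any cover of $X_{v_1}$ in the $0$-case to use at least $q/(d+p)$ sets, and the pass-by-pass state-relay simulation combined with \Cref{long_thm:mpj} is the standard one. Your extra type-(i) sets (the varieties of all non-selected children) do not appear in the paper's construction; they are unnecessary but harmless, since every such vertex is a non-ancestor of $v_1$ and \ref{long_ax:intersection} still caps its intersection with $X_{v_1}$ at $d+p$.

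There is one genuine gap. The \setcover problem as defined carries the promise $\bigcup_{S\in\cF}S=\cX$, and in the $0$-case your stream need not satisfy it: the connectors cover only $\cX\setminus X_{v_1}$, the emitted leaf varieties miss $X_{v_1}$, and nothing guarantees that the off-path varieties cover the (at least $q$) points of $X_{v_1}$. An algorithm is allowed to behave arbitrarily on such a non-instance, so the reduction as written does not force it to distinguish the two cases. The paper repairs this by having $\plr_1$ additionally emit every singleton subset of $\cX$; this keeps every produced instance feasible, while each singleton covers only one element of $X_{v_1}$, so the bound $Q_0\ge q/(d+p)$ survives.
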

\squeeze{.7\squeezelen}
\begin{proof}
  Let the edifice~$\cT$ be over a universe~$\cX$.  We shall transform an
  input~$\pi$ to $\mpj_\cT$ into an instance~$\cI(\pi)$ of \setcover on
  the universe~$\cX$, with each set in~$\cI(\pi)$ being assigned to one
  of $\plr_1,\ldots,\plr_{p+1}$.

  The transformation is as follows. Let~$u$ be a vertex of~$\cT$ in
  layer $j \ge 2$. Then~$\pi$ specifies a pointer from~$u$ to some
  vertex, say~$v$.  We encode this pointer as the set $X_u \setminus
  X_v$ and assign this set to $\plr_j$. 
  \longonly{
  We perform this encoding for each vertex in layers~$2$ and higher.
  }
  Furthermore, we encode the leaf
  bits of~$\pi$ as the collection of sets $\{X_z:\, \pi$ specifies a
  `$1$' at leaf $z\}$ and assign all sets in this collection to
  $\plr_1$.  Finally, we assign every singleton subset of~$\cX$ to
  $\plr_1$.  This completes the specification of our \setcover instance,
  which is valid thanks to the inclusion of the singletons. 

  Let $v_{p+1},\ldots,v_1$ be the unique root-to-leaf path in
  $\cT|_\pi$, with~$v_j$ being in layer~$j$, for each $j$. Put $X_j =
  X_{v_j}$, for each~$j$. By \ref{long_ax:incidence}, $\cX = X_{p+1}
  \supseteq \cdots \supseteq X_1$, so the encodings of the pointers at
  $v_{p+1},\ldots,v_2$ together cover $\bigcup_{j=2}^{p+1} (X_j
  \setminus X_{j-1}) = \cX \setminus X_1$.  Now suppose that
  $\mpj_\cT(\pi) = 1$. Then the encoding of the leaf bits
  includes~$X_1$, so $\cI(\pi)$ has a set cover of size $Q_1 := p+1$. 

  Next, suppose that $\mpj_\cT(\pi) = 0$. A set cover must, in
  particular, cover~$X_1$.  However, the encodings of the pointers at
  $v_{p+1},\ldots,v_2$ are all disjoint from~$X_1$ and the encoding of
  the leaf bits does not include~$X_1$. Therefore, $X_1$ must be covered
  using only singletons and sets corresponding to non-ancestors
  of~$v_1$.  For each such non-ancestor, $y$, the corresponding set in
  $\cI(\pi)$ is a subset of the variety~$X_y$.  By
  \ref{long_ax:intersection}, such a set covers at most $d+p$ elements of
  $X_1$ whereas, by \ref{long_ax:sink}, $|X_1| \ge q$.  Therefore every set
  cover in $\cI(\pi)$ uses least $Q_0 := q/(d+p)$ sets.

  It follows that approximating even the optimum {\em value} of
  $\cI(\pi)$ to a factor smaller than $Q_0 / Q_1 = q/((p+d)(p+1))$ is
  sufficient to determine $\mpj_\cT(\pi)$.
  \shortonly{
  Reasoning along standard lines, a $p$-pass, $s$-space-bounded,
  $\frac13$-error randomised streaming algorithm that approximates
  \setcover this well yields a $[p,sp,\frac13]$-protocol for
  $\mpj_\cT(\pi)$, whence $sp \ge \R^p(\mpj_\cT)$.
  }
  \longonly{

  Let~$\cA$ be a $p$-pass $\frac13$-error randomised streaming algorithm
  that approximates \setcover this well, using at most~$s$ bits of space.
  The players can solve $\mpj_\cT$ as follows. On input~$\pi$, each
  player follows the above encoding scheme so that players jointly
  arrive at the \setcover instance $\cI(\pi)$, with sets assigned
  amongst the players. They simulate the execution of~$\cA$
  on the stream~$\sigma$ obtained by taking $\plr_1$'s sets, followed by
  $\plr_2$'s sets, and so on. Each time the execution of~$\cA$ moves off
  one player's portion of~$\sigma$, that player broadcasts the memory
  contents of~$\sigma$.  This simulation uses one communication round
  per streaming pass, and spends~$sp$ bits of communication per round.
  Therefore it yields a $[p,sp,\frac13]$-protocol for $\mpj_\cT(\pi)$,
  whence $sp \ge \R^p(\mpj_\cT)$.
  }
\end{proof}

\subsection{Construction of an Edifice}

\begin{theorem} \label{long_thm:edifice}
  Let $k,d$, and~$q$ be integers with $k\ge 1$, $d \ge 0$, and $q \ge
  d+k$, with~$q$ being a prime power. Then there exists a
  $(k,d,q,q^{d+k}(1-1/q))$-edifice.
\end{theorem}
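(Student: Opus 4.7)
The plan is to construct $\cT$ explicitly inside the affine space $\cX = \FF_q^k$, with its internal nodes indexed by tuples of low-degree univariate polynomials. The bound on intersections will be an immediate consequence of the classical fact that a nonzero polynomial of degree at most $D$ has at most $D$ roots. Write points of $\cX$ as $(x, y_1, \ldots, y_{k-1})$ and let $P \subseteq \FF_q[x]$ be the set of polynomials of degree \emph{exactly} $d+k-1$, so that $|P| = q^{d+k-1}(q-1) = q^{d+k}(1-1/q)$. The vertex set of $\cT$ is $\bigcup_{s=0}^{k-1} P^s$: the empty tuple is the root, the parent of $(g_1,\ldots,g_s)$ is $(g_1,\ldots,g_{s-1})$, and a depth-$s$ vertex is declared to be at level $k-s$. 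Thus $\cT$ is a complete $|P|$-ary rooted tree with $k$ levels, which takes care of \ref{long_ax:arity} and \ref{long_ax:depth}.

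To each vertex $\bar g = (g_1, \ldots, g_s)$ I attach the ``polynomial graph with free tail''
\begin{equation*}
X_{\bar g} \;=\; \{\,(x,\,g_1(x),\ldots,g_s(x),\,y_{s+1},\ldots,y_{k-1}) \,:\, x \in \FF_q,\ y_j \in \FF_q\,\}.
\end{equation*}
By inspection: the root ($s=0$) gives $X_{\bar g} = \FF_q^k = \cX$; moving from a parent to a child simply adds the constraint $y_{s+1} = h(x)$ for the freshly appended polynomial $h \in P$, so the child's variety is a subset of the parent's; and each leaf ($s = k-1$) has exactly $q$ elements, one per value of $x$. This establishes \ref{long_ax:variety}, \ref{long_ax:incidence}, and \ref{long_ax:sink}.

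The main technical step is the intersection axiom \ref{long_ax:intersection}, and the setup makes it a one-liner. Fix a leaf $z = (f_1, \ldots, f_{k-1})$ and a non-ancestor $v = (g_1, \ldots, g_s)$. Non-ancestry forces $(g_1,\ldots,g_s) \ne (f_1,\ldots,f_s)$ as tuples, so some index $i \le s$ satisfies $f_i \ne g_i$ as formal polynomials. A point of $X_z$ has the shape $(x, f_1(x), \ldots, f_{k-1}(x))$, and lies in $X_v$ exactly when $f_j(x) = g_j(x)$ for every $j \in [s]$; in particular, $(f_i - g_i)(x) = 0$. Since $f_i - g_i$ is a nonzero element of $\FF_q[x]$ of degree at most $d+k-1$, this admits at most $d+k-1$ solutions in $\FF_q$, yielding the required bound $|X_z \cap X_v| \le d+k-1$.

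The main obstacle, modest as it is, lies in the bookkeeping: one has to confirm that the ``append a polynomial'' operation really produces a child variety that sits strictly inside its parent, and that distinct tuples genuinely produce distinct varieties (so that $\cT$ is a tree, not a tree with identifications). Both points rest on the fact that polynomials of degree at most $d+k-1$ are distinguishable as functions on $\FF_q$, which is precisely what the hypothesis $q \ge d+k$ provides; thereafter every axiom follows either by construction or from the single-variable polynomial root bound used above.
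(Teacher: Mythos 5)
Your construction is correct, and it is genuinely different from---and noticeably simpler than---the one in the paper. The paper labels the edges at level $i{+}1$ by ``edificial equations'' $y_i = \ell_i(y_1,\ldots,y_{i-1},f_{k-i}(x))$, where $\ell_i$ is a linear form with nonzero last coefficient and $f_{k-i}$ is monic of the level-dependent degree $d+k-i$; the count $q^{d+k}(1-1/q)$ arises from the $i$ linear-form coefficients combined with the $d+k-i$ free polynomial coefficients. Verifying the intersection axiom there requires a two-case analysis ($\ell_j=\ell_j^+$ versus $\ell_j\ne\ell_j^+$) and a linear-independence argument exploiting the distinct degrees of $f_1,\ldots,f_{k-1}$. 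You sidestep all of that by using the \emph{same} label set at every level, namely all polynomials of degree exactly $d+k-1$ (which has the right cardinality $(q-1)q^{d+k-1}$), and by defining each variety directly as a graph-with-free-tail; a point of a leaf variety is then determined by $x$, and the intersection bound reduces to the single observation that $f_i-g_i$ is a nonzero polynomial of degree at most $d+k-1$. Nothing is lost for this theorem, and your construction would even support the wideness property needed later (children whose polynomials differ by a nonzero constant have disjoint varieties, exactly paralleling the paper's ``similar'' classes). Two small remarks: the definition of an edifice does not actually require distinct vertices to carry distinct varieties, so your worry about ``identifications'' is moot (though your distinctness claim is true when $q\ge d+k$); and, like the paper, you implicitly read axiom (E6) as excluding $v=z$ itself, which is the intended convention.
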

\squeeze{.7\squeezelen}
\begin{proof}
  We shall construct an explicit edifice over the universe $\cX =
  \FF_q^k$. The varieties of our edifice will be certain well-structured
  varieties in the sense of algebraic geometry, i.e., solution sets of
  polynomial equations.
  Write the coordinates of a generic point in $\FF_q^k$ as
  $(x,y_1,\ldots,y_{k-1})$. An {\em edificial equation} of {\em rank}~$i$ is defined
  to be an equation of the form
  \begin{equation} \label{long_eq:edificial}
    y_i = \ell_i(y_1,\ldots,y_{i-1},f_{k-i}(x)) \,,
    \qquad 1 \le i \le k-1 \,,
  \end{equation}
  where $\ell_i(z_1,\ldots,z_i)$ is a homogeneous linear form over
  $\FF_q$ whose $z_i$-coefficient is nonzero and~$f_j(x)$ is a monic
  polynomial in $\FF_q[x]$ of degree exactly $d+j$. \Cref{long_eq:edificial}
  is abbreviated as $\ed{\ell_i}{f_{k-i}}$. 

  Notice that irrespective of the value of~$i$ there are exactly $d+k$
  coefficients appearing on the right-hand side of \cref{long_eq:edificial},
  one of which must be nonzero. There are exactly $t := q^{d+k}(1-1/q)$
  ways to choose these coefficients, leading to exactly~$t$ distinct
  edificial equations of each rank.

  Let~$\cT$ be a rooted complete $t$-ary tree with~$k$ levels, the
  root~$r$ being at level~$k$. For $1 \le i \le k-1$, for each
  level-$(i+1)$ vertex~$v$ of~$\cT$, label each of the~$t$ edges leaving
  $v$ with one of the~$t$ distinct rank-$i$ edificial equations.
  Associate a variety~$X_v$ with vertex~$v$ as follows.
  Let $X_r = \cX$. If $v \ne r$, let~$X_v$ be the variety
  defined by the set of edificial equations labelling the edges on the
  path from~$r$ to~$v$. We shall show that~$\cT$, with these associated
  varieties, forms a $(k,d,q,t)$-edifice. Certainly, properties
  \ref{long_ax:arity}, \ref{long_ax:depth}, \ref{long_ax:variety}, and
  \ref{long_ax:incidence} are immediate. The following observation will be
  helpful in establishing the remaining properties.

  \begin{observation} \label{long_obs:edificial}
    Suppose $\bx = (x,y_1,\ldots,y_{k-1})$ satisfies the edificial equations
    $\ed{\ell_1}{f_{k-1}},\ldots,\ed{\ell_j}{f_{k-j}}$ for some $j$ with $1
    \le j \le k-1$. Then there exist linear forms $\lambda_i(z_1,\ldots,z_i)$
    over $\FF_q$ such that
    \begin{equation} \label{long_eq:determined}
      y_i = \lambda_i(f_{k-1}(x),\ldots,f_{k-i}(x)) \,,
      \qquad 1 \le i \le j \,.
    \end{equation}
    Therefore each of $y_1,\ldots,y_j$ is determined by~$x$.
  \end{observation}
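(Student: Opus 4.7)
The plan is to establish \eqref{long_eq:determined} by induction on $i$, running from $1$ up to $j$. The proof is really just unfolding the recursive definition of an edificial equation while using the closure of homogeneous linear forms under composition.

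For the base case $i=1$, the equation $\ed{\ell_1}{f_{k-1}}$ reads $y_1 = \ell_1(f_{k-1}(x))$, where $\ell_1$ is a homogeneous linear form in one variable. Thus $\lambda_1 := \ell_1$ (viewed as a linear form in $z_1$) works immediately, giving $y_1 = \lambda_1(f_{k-1}(x))$.

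For the inductive step, suppose we have already produced linear forms $\lambda_1, \ldots, \lambda_{i-1}$ so that $y_h = \lambda_h(f_{k-1}(x), \ldots, f_{k-h}(x))$ for $1 \le h \le i-1$. The $i$th edificial equation says $y_i = \ell_i(y_1, \ldots, y_{i-1}, f_{k-i}(x))$. Substituting the inductive expressions for $y_1,\ldots,y_{i-1}$ into the right-hand side, we obtain $y_i$ as $\ell_i$ evaluated at arguments each of which is a homogeneous linear form in $f_{k-1}(x), \ldots, f_{k-i+1}(x)$, together with the additional argument $f_{k-i}(x)$. Since $\ell_i$ is itself a homogeneous linear form and the composition of homogeneous linear forms is again a homogeneous linear form, we may collect terms to obtain a single homogeneous linear form $\lambda_i(z_1,\ldots,z_i)$ over $\FF_q$ with $y_i = \lambda_i(f_{k-1}(x),\ldots,f_{k-i}(x))$, completing the induction.

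There is no real obstacle here: the statement is a formal consequence of the definition of edificial equations together with the fact that linear forms are closed under composition. The only mildly delicate point is bookkeeping the indices so that after $i$ substitutions the arguments match $f_{k-1}(x),\ldots,f_{k-i}(x)$ in the correct order; this just follows the order in which the edificial equations are applied. The final sentence of the observation---that each $y_i$ with $i \le j$ is determined by~$x$---is then immediate from \eqref{long_eq:determined}, since $f_{k-1}(x),\ldots,f_{k-i}(x)$ are polynomial functions of~$x$.
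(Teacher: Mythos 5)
Your proof is correct and follows essentially the same route as the paper's: induction on $i$, taking $\lambda_1=\ell_1$ in the base case, and in the inductive step substituting the expressions for $y_1,\ldots,y_{i-1}$ into $\ell_i$ and invoking closure of linear forms under composition. Nothing further is needed.
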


  For the rest of this proof let~$z$ be a leaf; let $\bx =
  (x,y_1,\ldots,y_{k-1}) \in X_z$ be an arbitrary point in the variety
  at~$z$ and let $\ed{\ell_1}{f_{k-1}},\ldots,\ed{\ell_{k-1}}{f_1}$ be
  the edificial equations defining $X_z$. 
  \shortonly{%
  We record a corollary.
  }%
  \longonly{%
  We record the following corollary of \Cref{long_obs:edificial}.
  }

  \begin{observation} \label{long_obs:determined}
    The point~$\bx$ is completely determined by its first coordinate $x$.
  \end{observation}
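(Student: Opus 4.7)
The plan is to derive this as an immediate consequence of the preceding observation, specialised to $j = k-1$. Since $z$ is a leaf, it sits at level~$1$ of $\cT$, so the unique root-to-leaf path visits vertices at levels $k, k-1, \ldots, 1$ in order and consists of $k-1$ edges. By the construction, the edge that descends from a level-$(i+1)$ vertex to a level-$i$ vertex is labelled with a rank-$i$ edificial equation. Hence the $k-1$ edificial equations defining $X_z$ comprise exactly one equation of each rank $i \in \{1,2,\ldots,k-1\}$, i.e., they form a system $\ed{\ell_1}{f_{k-1}}, \ldots, \ed{\ell_{k-1}}{f_1}$ of the shape required by the previous observation.

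Because $\bx \in X_z$ satisfies every one of these equations, applying the previous observation with $j = k-1$ produces linear forms $\lambda_i$ over $\FF_q$ such that $y_i = \lambda_i(f_{k-1}(x), \ldots, f_{k-i}(x))$ for all $i \in \{1, \ldots, k-1\}$. Each $y_i$ is thus a fixed function of the first coordinate $x$ alone (the forms $\lambda_i$ and the polynomials $f_j$ are determined by the path to $z$ and so do not depend on $\bx$), and the entire point $\bx$ is pinned down by $x$. There is no real obstacle here; the only thing to verify carefully is the rank-bookkeeping along the path, and this statement is really packaged as a corollary to be reused in the subsequent arguments bounding $|X_z|$ and the pairwise intersections $|X_z \cap X_v|$ needed for axioms \ref{long_ax:sink} and \ref{long_ax:intersection}.
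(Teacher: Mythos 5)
Your proof is correct and matches the paper's: the paper also presents this observation as an immediate corollary of \Cref{long_obs:edificial} applied with $j = k-1$, using the fact that the leaf $z$'s defining system contains exactly one edificial equation of each rank $1,\ldots,k-1$ so that every $y_i$ is a function of $x$ alone. Nothing further is needed.
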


  It follows that for each $a \in \FF_q$, $X_z$ contains exactly one such point
  $\bx$ with $x = a$, whence $|X_z| = |\FF_q| = q$. This establishes property
  \ref{long_ax:sink}.

  Property~\ref{long_ax:intersection} requires a more careful examination of the
  form of the edificial equations. Consider a vertex~$v$ that is not an
  ancestor of the leaf~$z$. Let~$u$ be the highest (by level) ancestor of~$v$
  that is still not an ancestor of~$z$. Since $X_u \supseteq X_v$, it
  suffices to prove that $|X_z \cap X_u| \le d+k-1$. Suppose~$u$ is at level $j
  < k$.  Then~$X_u$ is defined by the $k-j-1$ highest-ranked edificial
  equations that define~$X_z$ (which are of ranks $k-1$ through $j+1$) plus an
  additional rank-$j$ equation $\ed{\ell^+_j}{f^+_{k-j}}$, where either
  $\ell_j \ne \ell^+_j$ or $f_{k-j} \ne f^+_{k-j}$, or both.

  Suppose that $\ell_j = \ell^+_j$, so that $f_{k-j} \ne f^+_{k-j}$.  Each
  point $\bx = (x,y_1,\ldots,y_{k-1}) \in X_z \cap X_u$ must, in particular
  satisfy $\ed{\ell_j}{f_{k-j}}$ and $\ed{\ell_j}{f^+_{k-j}}$.  Comparing
  these two equations gives
  \begin{align}
    & \ell_j(y_1,\ldots,y_{j-1},f_{k-j}(x)) = y_j 
      = \ell_j(y_1,\ldots,y_{j-1},f^+_{k-j}(x)) \label{long_eq:compare} \\
    \Rightarrow\quad
    & \ell_j(0,\ldots,0,f_{k-j}(x)-f^+_{k-j}(x)) = 0 \notag \\
    \Rightarrow\quad
    & f_{k-j}(x)-f^+_{k-j}(x) = 0 \label{long_eq:diffzero} \,,
  \end{align}
  because the linear form $\ell_j(z_1,\ldots,z_j)$ is required to have a
  nonzero $z_j$-coefficient. The left-hand side of \cref{long_eq:diffzero} is
  a nonzero univariate polynomial of degree at most $d+k-j$, whence it
  has at most $d+k-j$ roots in $\FF_q$. By \Cref{long_obs:determined}, it
  follows that $|X_z \cap X_u| \le d+k-j \le d+k-1$.

  Finally, suppose $\ell_j \ne \ell^+_j$. We now make the crucial observation
  that
  \begin{equation} \label{long_eq:indep}
    f_1(x), \ldots, f_{k-1}(x)~ \text{are linearly independent over}~ \FF_q \,,
  \end{equation}
  which holds because these polynomials have distinct degrees. With this in
  mind, examining \cref{long_eq:edificial,long_eq:determined} and recalling that
  $\ell_i(z_1,\ldots,z_i)$ has a nonzero $z_i$-coefficient, we see that
  $\lambda_i(z_1,\ldots,z_i)$ also has a nonzero $z_i$-coefficient.
  Therefore,
  for each $i \in \{1,\ldots,k-1\}$, 
  the collection of polynomials
  $\{\lambda_1(f_{k-1}(x)),\, \ldots,\, \lambda_i(f_{k-1}(x),\ldots,f_{k-i}(x))\}$
  is a basis for the linear subspace of $\FF_q[x]$ spanned by 
  $\{f_{k-1}(x),\ldots,f_{k-i}(x)\}$.

  Suppose $\bx = (x,y_1,\ldots,y_{k-1}) \in X_z \cap X_u$.  Proceeding as in
  \cref{long_eq:compare}, we find that
  \[
    \ell_j(y_1,\ldots,y_{j-1},f_{k-j}(x)) = y_j 
    = \ell^+_j(y_1,\ldots,y_{j-1},f^+_{k-j}(x)) \,.
  \]
  Therefore there exists a linear form $h(z_1,\ldots,z_{j-1})$ and scalars
  $a,a^+\in\FF_q$, where either $h \ne 0$ or $a \ne a^+$ or both, such that
  $h(y_1,\ldots,y_{j-1}) + a f_{k-j}(x) - a^+ f^+_{k-j}(x) = 0$.
  By \Cref{long_obs:edificial},
  \begin{align}
    h(\lambda_1(f_{k-1}(x)),\,
    \ldots,\, \lambda_{j-1}(f_{k-1}(x),\ldots,f_{k-j+1}(x)))
    + a f_{k-j}(x) - a^+ f^+_{k-j}(x) = 0 \,. \label{long_eq:lincomb}
  \end{align}
  We claim that the left-hand side of \cref{long_eq:lincomb} is a nonzero
  polynomial. If $h = 0$, this is immediate because $a \ne a^+$, whereas
  $f_{k-j}(x)$ and $f^+_{k-j}(x)$ are both monic of degree $d+k-j$.  If
  $h \ne 0$, then by our observations about the polynomials
  $\{\lambda_i(f_{k-1}(x),\ldots,f_{k-i}(x))\}$, the first term on the
  left-hand side is a nonzero polynomial in the span of
  $\{f_{k-1}(x),\ldots,f_{k-j+1}(x)\}$. In particular, its degree is at
  least $d+k-j+1$. The other two terms have degree at most $d+k-j$,
  which proves the claim.

  Thus, \cref{long_eq:lincomb} states that~$x$ is a root of a nonzero
  polynomial of degree at most $d+k-1$, a fact we derived from the
  condition that $\bx \in X_z \cap X_u$. By \Cref{long_obs:determined}, $|X_z
  \cap X_u| \le d+k-1$.
\end{proof}

\longonly{
\paragraph{Justifications for observations.}
For the sake of completeness, we formally justify the observations made
in the course of the just-concluded proof. \Cref{long_obs:edificial} can be
proved by induction on $i$. When $i = 1$, \cref{long_eq:edificial}
specialises to $y_1 = \ell_1(f_{k-1}(x))$, so we reach
\cref{long_eq:determined} by taking $\lambda_1 = \ell_1$. For general $i$, by
the induction hypothesis, we have
\begin{equation} \label{long_eq:unfold}
  y_i = \ell_i(\lambda_1(f_{k-1}(x)), \ldots,
  \lambda_{i-1}(f_{k-1}(x),\ldots,f_{k-i+1}(x)), f_{k-i}(x)) \,.
\end{equation}
Each argument to $\ell_i$ in the above equation is a linear form in
$\{f_{k-1}(x), \ldots, f_{k-i}(x)\}$, and $\ell_i$ is itself a linear
form. Taking $\lambda_i$ to be the ``composition'' of these linear
forms gives us \cref{long_eq:determined}.

\Cref{long_obs:determined} is, as noted, a simple corollary to \Cref{long_obs:edificial}.

We turn to the observation, made just after~\eqref{long_eq:indep}, that
$\lambda_i(z_1,\ldots,z_i)$ has a nonzero $z_i$-coefficient.  Of the $i$
arguments to $\ell_i$, only the last involves $f_{k-i}(x)$, and that
last argument is given a nonzero coefficient by the defining property of
$\ell_i$. The other arguments are polynomials in the span of
$\{f_{k-1}(x),\ldots,f_{k-i+1}(x)\}$. The linear independence observed
in~\eqref{long_eq:indep} completes the justification.
}

\subsection{Pass/Approximation Tradeoff for Set Cover}

We now bring together our technical results to obtain a pass/approximation
tradeoff for \setcover in the semi-streaming setting. 

\begin{theorem}[Main result] \label{long_thm:lb}
  Let $c > 1$ be a constant. Let~$\cA$ be a $p$-pass streaming algorithm
  that, for all large enough~$n$ and~$m$, approximates the optimum value
  of $\setcover_{n,m}$ instances to a factor smaller than
  $n^{1/(p+1)}/(c(p+1)^2)$ with probability at least $2/3$. Then~$\cA$
  must use $\Omega(n^c/p^3)$ bits of space. This space lower bound
  applies to instances with $m = \Theta(n^{cp})$.
\end{theorem}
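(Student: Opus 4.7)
The plan is to instantiate the edifice construction of \Cref{long_thm:edifice} and feed it into the reduction of \Cref{long_thm:reduction}, tuning the parameters so that the approximation threshold, space lower bound, and instance size all hit the theorem's targets simultaneously. Since the reduction uses a $(p+1)$-level edifice to rule out $p$-pass algorithms, I take $k = p+1$. The universe of the edifice is $\FF_q^{p+1}$, so I identify $n$ with $q^{p+1}$ and hence $q = n^{1/(p+1)}$. For general $n$, Bertrand's postulate supplies a prime power $q$ with $q = \Theta(n^{1/(p+1)})$, and inert padding of the universe absorbs the rounding.

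The crucial parameter to choose is the degree offset $d$. The reduction's approximation threshold is $q/((p+d)(p+1))$, and to ensure every algorithm meeting the statement's target $n^{1/(p+1)}/(c(p+1)^2)$ also beats this, I need $p+d \le c(p+1)$. Simultaneously, the edifice's arity is $t = q^{d+k}(1 - 1/q) = \Theta(q^{d+p+1})$, so forcing the space bound $\Omega(t/p^3)$ up to $\Omega(n^c/p^3)$ requires $d+p+1 \ge c(p+1)$. These two constraints pin $d$ into a unit interval, and I would take $d = \lceil (c-1)(p+1) \rceil$, which always lies inside it. The preconditions $q \ge d+k$ of \Cref{long_thm:edifice} and $q > (p+d)(p+1)$ of \Cref{long_thm:reduction} are automatic for large $n$, since $q$ grows with $n$ while $d$ depends only on $c$ and $p$.

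Once $d$ is fixed, the rest is bookkeeping. Combining \Cref{long_thm:reduction} with the $\mpj$ lower bound \Cref{long_thm:mpj} gives
\[
  \Omega\!\left(\frac{t}{p^3}\right)
  = \Omega\!\left(\frac{q^{d+p+1}}{p^3}\right)
  = \Omega\!\left(\frac{n^c}{p^3}\right)
\]
bits of space. For the set count, the reduction produces at most one encoding set per non-leaf vertex ($O(t^{k-2})$ sets), at most one per leaf ($t^{k-1} = t^p$ sets), and $n$ singletons; the leaf term dominates, yielding $m = \Theta(t^p) = \Theta(q^{p(d+p+1)}) = \Theta(n^{cp})$, as claimed.

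The main obstacle is not any single calculation but rather the simultaneous reconciliation of the three parameter constraints---approximation factor, space exponent, and instance-size exponent---by a single integer choice of $d$. Once it is observed that the admissible interval for $d$ is a unit interval $[(c-1)(p+1),\, (c-1)(p+1)+1]$ and therefore always contains an integer, everything else is direct substitution into \Cref{long_thm:reduction,long_thm:edifice,long_thm:mpj}.
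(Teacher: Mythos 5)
Your proposal is correct and follows essentially the same route as the paper: instantiate the $(p+1,d,q,t)$-edifice of \Cref{long_thm:edifice} with $d \approx (c-1)(p+1)$ and $n = q^{p+1}$, feed it into \Cref{long_thm:reduction}, and invoke \Cref{long_thm:mpj}; the paper simply sets $d=(c-1)(p+1)$ and computes $t = \Omega(n^c)$, reaching the same conclusion. Your extra care about the integrality of $d$ (via the unit-interval observation) and about general $n$ is a harmless refinement of what the paper leaves implicit.
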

\squeeze{.7\squeezelen}
\begin{proof}
  Let~$q$ be a sufficiently large prime power. Put $d = (c-1)(p+1)$, $t
  = q^{d+p+1}(1-1/q)$, and $n = q^{p+1}$. By \Cref{long_thm:edifice}, there
  exists a $(p+1,d,q,t)$-edifice over a universe $\cX$ with $|\cX| = n$.
  By \Cref{long_thm:reduction}, the space usage of~$\cA$, which approximates
  \setcover to a factor better than $n^{1/(p+1)}/((p+d)(p+1))$, is at
  least $\R^p(\mpj_T)/p$, where~$T$ is a complete $(p+1)$-level $t$-ary
  tree. By \Cref{long_thm:mpj}, this space bound is $\Omega(t/p^3) =
  \Omega(n^{d/(p+1)+1}(1-1/q)/p^3) = \Omega(n^c/p^3)$.

  Examining the reduction in \Cref{long_thm:reduction} shows that instances
  of \setcover demonstrating the above lower bound have roughly as many
  sets as the edifice has leaves, i.e., $m = \Theta(n^{cp})$.
\end{proof}

\noindent
It is instructive to note the following corollaries of \Cref{long_thm:lb}.
\begin{enumerate}[itemsep=1pt]
\item Let~$p$ be a constant. Then there exist positive constants $\alpha
< 1$ and $\beta > 1$ such that $(\alpha n^{1/(p+1)})$-approximating
\setcover in $p$ streaming passes requires $\Omega(n^\beta)$ space. In
particular, such an approximation is not possible for a semi-streaming
algorithm.
\item Every multi-pass semi-streaming $O(\log n)$-approximation
algorithm for \setcover requires $p = \Omega(\log n/\log\log n)$ passes.
\end{enumerate}

\longonly{
\subsection{Two-Player Communication Complexity of Set Cover}

Nisan~\cite{Nisan02} and Demaine \etal~\cite{DemaineIMV14} have studied
\setcover as a communication game. Our proof of \Cref{long_thm:lb} directly
implies a lower bound for a certain {\em multi}-player \setcover game.
But one may wonder about implications for the more fundamental setting
of {\em two}-player communication complexity. Our next theorem shows
that our technology does indeed yield a new two-player result.

In the two-player \setcover game, there is a fixed finite universe $\cX
= [n]$, Alice receives as input a collection $\cF \subseteq 2^\cX$, and
Bob receives a collection $\cG \subseteq 2^\cX$. The players wish to
solve the \setcover instance $(\cX, \cF \cup \cG)$ as cheaply as
possible. Specifically, they must output a cover certificate (analogous
to the array $\Cov$ in \Cref{long_alg:basic}) that specifies, for each
$x\in\cX$, the set in $\cF \cup \cG$ that covers $x$. A communication
protocol that gives such an output is said to be $\alpha$-approximate if
the implied set cover, $\Sol$, satisfies $|\Sol| \le \alpha|\Opt|$,
where $\Opt$ is an optimum solution to the instance.

By mimicking the standard offline greedy algorithm for \setcover, one
readily obtains a $(\ln n - \ln\ln n + \Theta(1))$-approximate protocol
that communicates at most $n$ messages, each message being $n$ bits
long; in particular, the total communication cost is $n^{O(1)}$.  Nisan
proved~\cite[Theorem~4]{Nisan02} that for every constant $\delta > 0$, a
$(\frac12-\delta)\log_2 n$-approximate protocol requires an amount of
communication that is exponentially larger, roughly $\exp(\sqrt n)$ for
small $\delta$. Nisan's theorem uses a reduction from
\textsc{set-disjointness} and is therefore agnostic about the number of
messages in the protocol. Our theorem complements this by giving a
``bounded-round'' lower bound.

\begin{theorem} \label{long_thm:comm}
  Let $c > 1$ be a constant.  Suppose there exists a (randomised)
  $\alpha$-approximate protocol for the two-player \setcover game that
  communicates a total of $C$ bits in at most $r$ messages. Then either
  $\alpha \ge n^{1/(r+1)}/(c(r+1)^2)$ or $C = \Omega(n^c/r^2)$.
\end{theorem}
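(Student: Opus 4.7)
The plan is to parallel the proof of \Cref{long_thm:lb}, but to reduce from a two-player variant of tree pointer jumping on an edifice. Set parameters exactly as there: let $q$ be a sufficiently large prime power, put $d = (c-1)(r+1)$, $n = q^{r+1}$, and $t = q^{d+r+1}(1-1/q) = \Theta(n^c)$. \Cref{long_thm:edifice} then supplies a $(r+1,d,q,t)$-edifice $\cT$ on a universe $\cX$ of size $n$. Define a two-player pointer-jumping game $\mpj^{\mathrm{2P}}_\cT$ by partitioning the virtual players $\plr_1,\ldots,\plr_{r+1}$ between Alice and Bob by parity of layer---assign the leaf bits and all odd-layer pointers to Bob, all even-layer pointers to Alice. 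Using the encoding from the proof of \Cref{long_thm:reduction}, each input $\pi$ yields a two-player \setcover instance $\cI(\pi)$ on $\cX$, with each player owning exactly the sets that encode her own layers. The case analysis of \Cref{long_thm:reduction} goes through verbatim and shows that any cover certificate with $|\Sol|$ smaller than $Q_0/Q_1 = q/((r+d)(r+1)) = n^{1/(r+1)}/(c(r+1)^2)$ times $|\Opt|$ determines $\mpj^{\mathrm{2P}}_\cT(\pi)$.

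Consequently, an $\alpha$-approximate two-player \setcover protocol using at most $r$ messages and $C$ bits of total communication, with $\alpha$ below the stated threshold, yields an $[r,C,\tfrac13]$-protocol for $\mpj^{\mathrm{2P}}_\cT$. It remains to show that any such two-player protocol for $\mpj^{\mathrm{2P}}_\cT$ requires $\Omega(t/(r+1)^2) = \Omega(n^c/r^2)$ bits. Because pointer data is split by layer parity, in each round the player who speaks can convey useful information about at most one ``new'' layer of $\cT$ (the layer it holds that is one deeper than the one resolved in the previous message); after $r$ messages there remains an unresolved layer whose $t$-way pointer is essentially uniform conditional on the transcript. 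A round-elimination plus direct-sum step over the $t$ siblings at that residual layer, modeled on the proof of \Cref{long_thm:mpj} but carried out in the 2-player alternating-message model, then delivers the $\Omega(t/(r+1)^2)$ bound. The denominator is $r^2$ rather than the $r^3$ from the streaming case because here the 2-player communication is bounded directly, with no ``space times number of players'' multiplier.

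The chief technical obstacle is verifying that this round-elimination/direct-sum framework indeed yields the $\Omega(t/(r+1)^2)$ bound for $\mpj^{\mathrm{2P}}_\cT$ under the layer-parity split: one must check that each round-elimination step on the 2-player transcript leaves behind a random instance of $\mpj^{\mathrm{2P}}$ on a one-level-shorter edifice (with the player-to-layer correspondence intact), and that the induction bottoms out at a single layer whose $t$-way pointer is truly hard to name from $o(t)$ bits. Once that adaptation is in place, combining the reduction with the lower bound produces the stated dichotomy in exactly the way \Cref{long_thm:lb} was deduced from \Cref{long_thm:reduction,long_thm:edifice,long_thm:mpj}.
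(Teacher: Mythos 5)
Your reduction is exactly the one the paper uses: encode an $\mpj_\cT$ instance over a $(r+1,d,q,t)$-edifice as a two-player \setcover instance by splitting the layers between Alice and Bob according to parity, and observe that the $Q_0/Q_1$ gap analysis from \Cref{long_thm:reduction} carries over verbatim, so that a sufficiently good $\alpha$-approximate $r$-message protocol for \setcover solves the two-player pointer-jumping game in $r$ messages and $C$ bits.

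The gap is in the last step. You correctly identify that everything now hinges on an $\Omega(t/r^2)$ lower bound for $r$-message \emph{two-player} tree pointer jumping under the layer-parity split, but you do not prove it: you outline a round-elimination-plus-direct-sum programme ``modeled on the proof of \Cref{long_thm:mpj}'' and explicitly defer its verification as ``the chief technical obstacle.'' \Cref{long_thm:mpj} is a statement about the $k$-player number-in-hand game with a prescribed speaking order, and its proof does not transfer to the two-player alternating-message model without real work (the two-player protocol is free to choose which player speaks when, and a single message can now reveal information about \emph{all} layers of the speaker's parity class at once, so the ``one new layer per message'' intuition has to be argued, not assumed). The paper avoids this entirely by invoking the known bounded-round two-party pointer-jumping lower bound of Klauck, Nayak, Ta-Shma, and Zuckerman~\cite{KlauckNTZ01}, which is precisely the $\Omega(t/r^2)$-type statement you need. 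So your proposal is the right reduction attached to an unproven lemma; replacing your sketched round-elimination argument with a citation to~\cite{KlauckNTZ01} completes it and recovers the paper's proof.
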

\begin{proof}[Proof sketch]
  We encode an instance of \textsc{pointer-jumping} on a tree as a
  \setcover instance, using our edifices, exactly as in the proof of
  \Cref{long_thm:reduction}. We then treat \textsc{pointer-jumping} as
  a {\em two}-player communication game, with Alice holding the
  information at vertices of the tree whose level is odd, and Bob
  holding the rest. For this two-player game, we invoke the
  bounded-round communication lower bound due to Klauck
  \etal~\cite{KlauckNTZ01} to finish the proof.
\end{proof}

While we could have used the above two-player version of
\textsc{pointer-jumping} as the basis for a data-streaming lower bound,
it is important to note that doing so would have considerably weakened
the streaming result, because $p$ streaming passes translate into $2p-1$
messages in a two-player protocol.
}


\section{Extension to Partial Cover}

\shortonly{
As foreshadowed, there is a relaxation of \setcover in which
a feasible solution may leave ``a few'' elements uncovered.
}%
\longonly{
Thus far we have focused on the \setcover problem as traditionally
defined, in which a feasible solution must cover the entire universe.
However, as is the case with many optimisation problems, \setcover
admits a relaxation in the form of a {\em bicriterial approximation},
wherein the feasibility constraint can be violated by some amount~$\eps$,
and we seek a solution with cost at most $\alpha(\eps,n)$ times
the optimum fully feasible solution, for some function~$\alpha$. 

}%
To be precise, we consider the problem $\partcover_{n,m,\eps}$, where an
instance $\cI = (\cX,\cF,\eps)$ consists of a universe $\cX$, with $|\cX|
= n$, a collection of sets $\cF \subseteq 2^\cX$ with $|\cF| = m$, and a
parameter $\eps \in [0,1]$. The goal is to compute a $(1-\eps)$-partial
cover of~$\cX$, defined as a collection $\Sol \subseteq \cF$ that covers
at least $(1-\eps)|\cX|$ elements. Such a solution $\Sol$ is said to be
$\alpha$-approximate if $|\Sol| \le \alpha |\Opt|$---or, in the weighted
version, $w(\Sol) \le \alpha w(\Opt)$---where $\Opt$ is a
minimum-cost set cover for $(\cX,\cF)$. Notice that we are comparing the
cost of our partial cover with that of the best \emph{total} cover.

\subsection{Upper Bound}
\label{long_sec:partial-upper-bound}

\shortonly{
Our upper bound generalises \Cref{long_thm:ub} (up to constant ``$8$''),
by including partial covers and weighted sets. A proof appears in
\Cref{long_sec:partial-upper-bound}.  For succinctness,
we assume all weights are $O(\log m)$-bit integers. 
}
\longonly{
We begin with our most general upper bound, which includes partial
covers and weighted sets. For convenience in stating the space bound, we
assume that all weights are $O(\log m)$-bit integers. 
}

\begin{theorem} \label{long_thm:ub-partial}
  For every integer $p \ge 1$, there is a $p$-pass, $O(n\log m)$-space
  algorithm for the {\em weighted} version of $\partcover_{n,m,\eps}$
  that produces an $\alpha(n,\eps)$-approximate cost $(1-\eps)$-partial
  cover, where
    $\alpha(n,\eps) = \min\{ 8p\eps^{-1/p}, (8p+1)n^{1/(p+1)} \}$.
\end{theorem}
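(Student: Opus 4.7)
The plan is to extend the progressive-greedy scheme of \Cref{long_alg:folded} with two modifications: (i) handle weighted sets by replacing the marginal contribution $|S \setminus C|$ with the coverage-per-unit-weight ratio $|S \setminus C|/w(S)$, and (ii) terminate early once a $(1-\eps)$-fraction of $\cX$ has been covered. Since the claimed bound is the minimum of two expressions, the algorithm runs two parameterised variants in parallel and returns the cheaper of the two candidate covers.

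Variant~A (for the $(8p+1)n^{1/(p+1)}$ bound) is essentially \Cref{long_alg:folded} with Emek--Ros\'en-style handling of weights: thresholds $\tau_j^A = n^{1-j/(p+1)}$ with the last two passes folded as in \Cref{long_lem:folded}. A weighted analogue of \Cref{long_lem:bdd-pass}---where $(\tau,\rho)$-boundedness now reads ``every set $S$ satisfies $|S \setminus C_0|/w(S) \le \rho/\tau$''---bounds each pass's contribution to $w(\Sol)$ by $n^{1/(p+1)} w(\Opt)$, and telescoping over the $p$ passes, together with a small constant-factor slack introduced by the weighted thresholding rule, yields the claimed bound. This variant returns a total cover, so it trivially satisfies the $(1-\eps)$-partial requirement.

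Variant~B (for the $8p\eps^{-1/p}$ bound) uses a geometric threshold sequence $\tau_j^B$ with ratio $\tau_{j-1}^B/\tau_j^B = \eps^{-1/p}$, tuned so that the final threshold $\tau_p^B$ matches the rate at which an Emek--Ros\'en single-pass algorithm would stop when only $\eps n$ elements remained. The weighted bounded-pass lemma again bounds each pass's weight contribution by $\eps^{-1/p} w(\Opt)$ up to constants, telescoping to $8p\eps^{-1/p} w(\Opt)$. The novel step is the termination guarantee: after the $p$th pass, any still-uncovered element must lie only in sets whose current coverage-per-unit-weight ratio is below $1/\tau_p^B$; an Emek--Ros\'en-style budget argument, charging each such element against the optimum total cover, shows that at most $\eps n$ elements remain uncovered, so the accumulated $\Sol$ is a valid $(1-\eps)$-partial cover.

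The main obstacle lies in the termination analysis of variant~B. Emek--Ros\'en's single-pass algorithm sets its threshold \emph{dynamically} in proportion to the remaining uncovered mass, terminating automatically when $\le \eps n$ elements remain; in our multi-pass version the thresholds are fixed in advance as a geometric sequence, and we must show that this grid tracks the ``ideal'' dynamic threshold closely enough to preserve both the per-pass weight bound and the final partial-cover guarantee, with the slack absorbed into the multiplicative constant~$8$. The $O(n \log m)$ space bound is then immediate, since each variant stores only $\Sol$ and $\Cov$ with $O(\log m)$-bit entries, plus the $O(\log m)$-bit weights being tracked.
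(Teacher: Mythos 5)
Your overall architecture (two parallel variants, one for each branch of the $\min$) matches the paper's, but your route for each variant --- re-deriving a weighted, fixed-threshold version of \Cref{long_alg:basic} --- runs into a genuine gap that the paper deliberately sidesteps. The weighted analogue of \Cref{long_lem:bdd-pass} needs a precondition for the \emph{first} pass: in the unweighted case this is free because every contribution is trivially at most $n = \rho\tau_1$, but the density $|S \setminus C|/w(S)$ ranges over $[1/W, n]$ where $W$ is the maximum weight, so no fixed geometric grid of $p$ density thresholds with ratio $n^{1/(p+1)}$ (or $\eps^{-1/p}$) covers that range. To anchor the grid you would need to know $w(\Opt)$ up to a constant factor. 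The same unknown quantity blocks the termination argument in your Variant~B: after the last pass, every uncovered element lies only in sets of density below the final threshold $\theta_p$, which gives at most $w(\Opt)\,\theta_p$ uncovered elements; forcing this to be at most $\eps n$ requires $\theta_p \le \eps n / w(\Opt)$. You correctly identify this as ``the main obstacle'' but then assert the slack is absorbed into the constant $8$; it is not a constant-factor issue but a structural dependence of the threshold schedule on $w(\Opt)$, which a streaming algorithm does not know in advance. (Resolving it essentially forces you to reinvent the Emek--Ros\'en machinery of choosing the threshold \emph{after} the pass, or to run $O(\log(nW))$ parallel guesses of $w(\Opt)$.)

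The paper's proof avoids all of this by using the one-pass Emek--Ros\'en algorithm as a black box: run it $p$ times, each pass computing a $(1-\eps^{1/p})$-partial cover (respectively, a $(1-n^{-1/(p+1)})$-partial cover) of the \emph{residual} uncovered portion of $\cX$, at cost at most $8\eps^{-1/p}w(\Opt)$ (respectively $8n^{1/(p+1)}w(\Opt)$) per pass; the uncovered fraction then telescopes to $\eps$ after $p$ passes in the first variant, and to at most $n^{1/(p+1)}$ elements in the second, each of which is finished off by its recorded cheapest covering set of cost at most $w(\Opt)$, giving the ``$+1$''. All of the weighted thresholding and the post-hoc threshold selection is delegated to the single-pass subroutine, whose guarantee is exactly of the form $O(\min\{1/\eps', \sqrt{n}\})\,w(\Opt)$. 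If you want to salvage your approach, you must either make the thresholds depend on a guessed value of $w(\Opt)$ or invoke the Emek--Ros\'en analysis for each pass rather than a weighted bounded-pass lemma with a priori fixed thresholds.
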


\longonly{
\squeeze{\squeezelen}
\begin{proof}
  We run the following two schemes in parallel, returning the lower-cost
  solution.  First, we run the Emek--Ros\'{e}n algorithm for~$p$ passes,
  each time obtaining a $(1-\eps^{1/p})$-partial cover of the remaining
  (uncovered) portion of~$\cX$, and each time adding at most
  $8\eps^{-1/p} w(\Opt)$ cost to our solution~$\Sol$.  By definition of
  a partial cover, for each $j \in [p]$, the collection of sets
  constituting $\Sol$ after~$j$ passes leaves at most $\eps^{j/p} |\cX|$
  elements uncovered. Therefore, in the end, $\Sol$ is a
  $(1-\eps)$-partial cover.

  Second, we run the Emek--Ros{\'e}n algorithm for~$p$ passes (again) but here, in each pass,
  obtaining a~$(1-1/n^{1/(p+1)})$-partial cover of the remaining
  (uncovered) portion of~$\cX$, and each time adding at most
  $8n^{1/(p+1)} w(\Opt)$ cost to our solution~$\Sol$.  The collection of
  sets constituting $\Sol$ after~$j$ passes leaves at most
  $n^{(p+1-j)/(p+1)}$ elements uncovered.  After~$p$ passes, $\Sol$
  covers all but at most $n^{1/(p+1)}$ elements.  Covering each of these
  with its cheapest-covering set---which the Emek--Ros{\'e}n algorithm
  records---of cost at most $w(\Opt)$, leads to a total cost of at most
  $(8p+1)n^{1/(p+1)}w(\Opt)$.  Since this is a full cover of~$\cX$, it
  is also a $(1-\eps)$-partial cover.
\end{proof}

The above upper bound generalises \Cref{long_thm:ub}, except for the constant
``$8$'' that arises in the Emek--Ros\'en analysis. One can tweak their
algorithm so as to replace the~$8$ with $(1+\delta)^3$, where $\delta >
0$ is a constant of our choice, at the cost of increasing the space
usage by a factor of $\Theta(1/\log(1+\delta))$, which is about
$\Theta(\delta^{-1})$ for small~$\delta$.
}

%

\subsection{Lower Bound}
\label{long_sec:partial-lower-bound}

We shall now show that \Cref{long_thm:ub-partial} is asymptotically tight for
every constant $p$ by proving an appropriate lower bound on the
approximation factor that a semi-streaming algorithm for \partcover can
achieve. Our lower bound will hold even for unweighted \partcover and
will match the upper bound of \Cref{long_thm:ub-partial} up to
a~$\Theta(p^3)$ factor.

Our proof is based on edifices---as in the proof of
\Cref{long_thm:lb}---except that we need a different, more complicated,
setting of parameters that is not directly achieved by
\Cref{long_thm:edifice}. Instead, we revisit the edifices constructed in the
proof of \Cref{long_thm:edifice} and observe that they have an additional
geometric property that we call {\em wideness}: roughly speaking, each
level contains many groups of mutually parallel varieties.  Clustering
these parallel classes into ``supervarieties'' gives us new edifices
with the desired parameters.

Let $\cC_\cT(u)$ denote the set of
children of a vertex~$u$ in a tree~$\cT$.
A $(k,d,q,t)$-edifice $\cT$ is said to be {\em $(b,t')$-wide} if, for
each non-leaf vertex~$u$ of~$\cT$, there exist subsets
$\cV_1,\ldots,\cV_{t'} \subseteq \cC_\cT(u)$ such that
\shortonly{
(i) $\cV_1,\ldots,\cV_{t'}$ are pairwise disjoint; 
(ii) for all $i\in[t']$, $|\cV_i| = b$; and 
(iii) for all $i\in[t']$, for all $v\ne v'\in
\cV_i$, we have $X_v \cap X_{v'} = \varnothing$.
}
\longonly{
\begin{enumerate}[itemsep=1pt,label=(W\arabic*),leftmargin=*,widest={(W5)}]
\item \label{long_ax:partition} $\cV_1,\ldots,\cV_{t'}$ are pairwise disjoint;
\item \label{long_ax:width} for all $i\in[t']$, $|\cV_i| = b$; and
\item \label{long_ax:parallel} for all $i\in[t']$, for all $v\ne v'\in
\cV_i$, we have $X_v \cap X_{v'} = \varnothing$.
\end{enumerate}
}

\shortonly{
Complete proofs of the following lemmas and theorem appear in
\Cref{long_sec:partial-lower-bound}.}
\begin{lemma} \label{long_lem:rainbow}
  If there exists a $(k,d,q,t)$-edifice $\cT$ on universe~$\cX$ that is
  $(b,t')$-wide, then there exists a $(k, b^k(d+k-1), b^{k-1}q,
  t')$-edifice on the same universe~$\cX$.
\end{lemma}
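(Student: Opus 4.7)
The plan is to build the new edifice $\cT'$ as a ``coarsening'' of $\cT$: each vertex of $\cT'$ at depth $\ell$ will bundle $b^{\ell}$ vertices of $\cT$ at level $k-\ell$, obtained by iteratively unfolding one parallel class per descent step, and its new variety will be the union of the corresponding original varieties.

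At each non-leaf $u\in\cT$, first fix the $t'$ parallel classes $\cV_1^u,\ldots,\cV_{t'}^u$ promised by $(b,t')$-wideness. Let $\cT'$ be the complete $t'$-ary tree with $k$ levels, labelling each edge out of a non-leaf by a distinct element of $[t']$. To each $u'\in\cT'$ at depth $\ell$, inductively associate a set $U(u')$ of level-$(k-\ell)$ vertices of $\cT$: the root of $\cT'$ maps to the root of $\cT$, and if $u''$ is the parent of $u'$ along an edge labelled $i$, set $U(u')=\bigcup_{w\in U(u'')}\cV_i^w$, which by induction has size $b^\ell$. Assign the variety $X'_{u'}:=\bigcup_{w\in U(u')}X_w$. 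Axioms \ref{long_ax:arity}--\ref{long_ax:incidence} are then immediate from the construction.

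The key structural observation, used for both \ref{long_ax:sink} and \ref{long_ax:intersection}, is that for every $u'\in\cT'$ the family $\{X_w:w\in U(u')\}$ is pairwise disjoint. I would prove this by induction on depth: two distinct $w,w'\in U(u')$ either share a parent $p\in U(u'')$---so both lie in the same parallel class $\cV_i^p$ and are disjoint by wideness---or have distinct parents that are themselves disjoint by the inductive hypothesis, in which case disjointness propagates down via \ref{long_ax:incidence} of $\cT$. Summing the sizes in this disjoint union over the $b^{k-1}$ original leaves in $U(z')$ then yields $|X'_{z'}|\ge b^{k-1}q$, establishing \ref{long_ax:sink}.

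The main obstacle is \ref{long_ax:intersection}. For a leaf $z'$ of $\cT'$ and a non-ancestor $v'\in\cT'$, the disjointness observation decomposes the intersection as $|X'_{z'}\cap X'_{v'}|=\sum_{z_i\in U(z'),\,v_j\in U(v')}|X_{z_i}\cap X_{v_j}|$. The pivotal sub-claim is that no $v_j$ is an ancestor in $\cT$ of any $z_i$: if $s$ is the first descent-depth at which the labels of $z'$ and $v'$ disagree, then the level-$(k-s)$ ancestor in $\cT$ of each $z_i$ lies in the parallel class $\cV_{i^{z'}_s}^a$ under some common ancestor $a$, while $v_j$ (or its level-$(k-s)$ ancestor, when $v'$ descends further) lies in the disjoint parallel class $\cV_{i^{v'}_s}^a$. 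Hence \ref{long_ax:intersection} for $\cT$ applies to each summand, bounding it by $d+k-1$. The target $d'+k-1=b^k(d+k-1)+k-1$ then follows from a careful bookkeeping of the $(z_i,v_j)$ pair count in terms of $|U(z')|$ and $|U(v')|$; this combinatorial count, together with the ``no ancestor'' sub-claim, is where the technical work is concentrated.
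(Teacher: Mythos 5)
Your construction coincides with the paper's (the paper phrases it as an edge-colouring followed by merging vertices that share a ``rainbow'' of colours, but your map $u'\mapsto U(u')$ produces exactly the same supervertices and the same varieties), and your handling of \ref{long_ax:arity}--\ref{long_ax:sink} is sound; the induction showing that $\{X_w: w\in U(u')\}$ is pairwise disjoint is correct and is in fact a detail the paper glosses over. The genuine gap is in \ref{long_ax:intersection}, precisely at the step you defer to ``careful bookkeeping.'' Bounding every summand $|X_z\cap X_v|$ (for $z\in U(z')$, $v\in U(v')$) by $d+k-1$ via your no-ancestor sub-claim and then counting all pairs gives $|U(z')|\cdot|U(v')|\cdot(d+k-1)=b^{(k-1)+(k-j)}(d+k-1)$ when $v'$ is at level $j$; this exceeds the target $b^k(d+k-1)+k-1$ by a factor of about $b^{k-j-1}$, so it fails whenever $v'$ is not a child of the root. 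No accounting ``in terms of $|U(z')|$ and $|U(v')|$'' can repair this: you must show that almost all pairs contribute \emph{zero}, not merely at most $d+k-1$.

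Two ingredients are missing. First, reduce to the case where the parent $y'$ of $v'$ \emph{is} an ancestor of $z'$, by replacing $v'$ with its highest ancestor that is still not an ancestor of $z'$; by \ref{long_ax:incidence} this only enlarges $X'_{v'}$. Second, for $z\in U(z')$ and $v\in U(v')$, let $a$ and $a'$ be their respective ancestors in $U(y')$: if $a\ne a'$ then $X_a\cap X_{a'}=\varnothing$ by your disjointness observation, hence $X_z\cap X_v=\varnothing$ and the pair contributes nothing. The surviving pairs are those descending from a common $y\in U(y')$, and there are exactly $|U(y')|\cdot b^{j}\cdot b=b^{k-j-1}\cdot b^{j}\cdot b=b^{k}$ of them, each contributing at most $d+k-1$ by \ref{long_ax:intersection} for $\cT$ (the surviving $v$ is still not an ancestor of the corresponding $z$, since they sit in different parallel classes under $y$). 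This yields the bound $b^k(d+k-1)$, which is what the paper proves.
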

\longonly{
\squeeze{\squeezelen}
\begin{proof}
  The desired edifice is built by ``merging'' certain carefully chosen
  sets of vertices of~$\cT$.

  Define the following colour-and-trim procedure on a vertex~$u$ of~$\cT$.
  If~$u$ is a leaf, then do nothing. Otherwise, let
  $\cV_1,\ldots,\cV_{t'}$ be subsets of $\cC_\cT(u)$ satisfying
  \ref{long_ax:partition}--\ref{long_ax:parallel}.  For each $i\in[t']$, for each
  $v \in \cV_i$, assign colour~$i$ to the edge from~$u$ to~$v$.  Delete
  all uncoloured edges out of~$u$ as well as the subtrees pointed to by
  these edges.  Then recursively colour-and-trim the remaining vertices
  in~$\cC_\cT(u)$. 

  Let~$\cT'$ be the fully edge-coloured $(bt')$-ary tree obtained by
  applying this colour-and-trim procedure to $r$, the root of $\cT$.
  Reusing the varieties from~$\cT$ makes~$\cT'$ a $(k,d,q,bt')$-edifice.

  For each vertex~$v$ of~$\cT'$, define the \emph{rainbow} at~$v$ to be
  the sequence of colours on the unique path from~$r$ to~$v$. Create a
  new edge-coloured rooted tree~$\cT''$ by merging vertices of~$\cT'$
  that have the same rainbow into ``supervertices'' and defining the
  parent of a supervertex~$v''$ to be the vertex~$u''$ whose rainbow is
  obtained by deleting the last colour in the rainbow at~$v''$; assign
  this deleted colour to the edge from~$u''$ to~$v''$.
  Property~\ref{long_ax:width} implies that each~$\cV_i$ is nonempty;
  property~\ref{long_ax:partition} then implies that~$\cT''$ is a $t'$-ary
  tree with~$k$ levels.

  For each vertex~$u''$ of~$\cT''$, let $\class{u''}$ denote the set of
  vertices of~$\cT'$ that were merged to produce~$u''$.
  Define the variety $X_{u''} \subseteq \cX$ thus:
  \[
    X_{u''} = \biguplus_{u\in\class{u''}} X_u \,.
  \]
  By~\ref{long_ax:parallel}, the above union is indeed a {\em disjoint} union
  (denoted by ``$\uplus$'').
  
  We shall show that~$\cT''$ is the desired edifice. Properties
  \ref{long_ax:arity}, \ref{long_ax:depth}, and~\ref{long_ax:variety} are immediate.
  Property~\ref{long_ax:incidence} follows from the same property of~$\cT'$
  and the observation that whenever two vertices of~$\cT'$ are merged in
  $\cT''$, so are their parents. For property~\ref{long_ax:sink}, first note
  that \ref{long_ax:width} implies that at each level $j\in[k]$ there are
  exactly $b^{k-j}$ vertices of~$\cT'$ that have a particular rainbow.
  Thus, for each leaf~$z''$ of~$\cT''$ we have $|\class{z''}| = b^{k-1}$.
  Using property~\ref{long_ax:sink} of~$\cT'$, we have
  \[
    |X_{z''}| = \bigg| \biguplus_{z\in\class{z''}} X_z \bigg|
    = \sum_{z\in\class{z''}} |X_z| \ge \sum_{z\in[z']} q
    = b^{k-1}q \,.
  \]

  Finally, we address property~\ref{long_ax:intersection}. As in the proof of
  \Cref{long_thm:edifice}, it suffices to upper-bound $|X_{z''} \cap
  X_{u''}|$, where~$z''$ is a leaf of~$\cT''$ and~$u''$ is a vertex of~$\cT''$
that is not an ancestor of~$z''$, whereas the parent~$y''$ of~$u''$ is.
Suppose that~$u''$ is at level $j < k$. Then
  \begin{equation}
    X_{z''} \cap X_{u''}
    = \bigg( \bigcup_{z\in\class{z''}} X_z \bigg) \cap
      \bigg( \bigcup_{u\in\class{u''}} X_u \bigg)
    = \bigcup_{z\in\class{z''},~ u\in\class{u''}} (X_z \cap X_u) \,.
    \label{long_eq:cable-intersection}
  \end{equation}
  Since $|\class{z''}| = b^{k-1}$ and $|\class{u''}| = b^{k-j}$, this latter
  expression immediately leads to $|X_{z''} \cap X_{u''}| \le
  b^{2k-j-1}(d+k-1)$, using property~\ref{long_ax:intersection} of $\cT'$.
  However, this upper bound is too weak; to strengthen
  it, we consider the structure of~$X_{z''}$ and~$X_{u''}$ more
  carefully.

  Consider a generic $z\in\class{z''}$ and a generic $u\in\class{u''}$.
  There must exist $y_1,y_2\in\class{y''}$ such that~$z$ is a descendant
  of~$y_1$ and~$u$ is a descendant of~$y_2$. The crucial observation is
  that if $y_1 \ne y_2$, then by~\ref{long_ax:parallel}, $X_{y_1} \cap
  X_{y_2} = \varnothing$, whence by~\ref{long_ax:incidence},
$X_z \cap X_u = \varnothing$. Therefore
  the pair $(z,u)$ contributes to the latter union in
  \cref{long_eq:cable-intersection} only when $y_1 = y_2$. Therefore,
  \[
    X_{z''} \cap X_{u''}
    = \bigcup_{y\in\class{y''}} \bigcup_{
        \ontopt{z\in\class{z''},~u\in\class{u''}}
        {z,u~\text{descendants of}~y}}
      (X_z \cap X_u) \,.
  \]
  Since $|\class{y''}| = b^{k-j-1}$ and each $y\in\class{y''}$ has $b^j$
  descendants in $\class{z''}$ and $b$ descendants in $\class{u''}$, we
  obtain $|X_{z''} \cap X_{u''}| \le b^{k-j-1} b^j b (d+k-1) =
  b^k(d+k-1) \le b^k(d+k-1)+k-1$, as required.
\end{proof}
}

\begin{lemma} \label{long_lem:is-wide}
  The $(k,d,q,t)$-edifice constructed in \Cref{long_thm:edifice} is
  $(\floor{\delta q}, \floor{1/\delta}t/q)$-wide for all $\delta\in(0,1]$.
\end{lemma}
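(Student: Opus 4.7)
The plan is to revisit the explicit edifice built in the proof of \Cref{long_thm:edifice} and, for each non-leaf vertex $u$, exhibit the required family of ``parallel'' child groups $\cV_1,\ldots,\cV_{t'}$. Suppose $u$ lies at level $i+1$, so its $t$ children correspond bijectively to the $t$ distinct rank-$i$ edificial equations of the form $\ed{\ell_i}{f_{k-i}}$.

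The key algebraic ingredient is already present inside the proof of \Cref{long_thm:edifice}: the derivation between~\eqref{long_eq:compare} and~\eqref{long_eq:diffzero} actually shows the following stronger statement. If two sibling children $v,v'$ of $u$ carry equations $\ed{\ell_i}{f_{k-i}}$ and $\ed{\ell_i}{f'_{k-i}}$ that share the \emph{same} linear form $\ell_i$, then any common point of $X_v$ and $X_{v'}$ must satisfy $f_{k-i}(x)-f'_{k-i}(x)=0$; hence whenever the polynomial $f_{k-i}-f'_{k-i}$ has no roots in $\FF_q$, the varieties $X_v$ and $X_{v'}$ are disjoint. The cheapest way to force this no-roots condition is to insist that $f_{k-i}-f'_{k-i}$ be a nonzero \emph{constant}.

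Guided by this, I would construct the groups as follows. For each of the $(q-1)q^{i-1}$ allowed linear forms $\ell_i$, partition the $q^{d+k-i}$ monic polynomials of degree $d+k-i$ into $q^{d+k-i-1}$ cosets under the relation ``differ by a constant in $\FF_q$'' (each coset of size exactly~$q$), and inside each coset pack $\floor{1/\delta}$ pairwise-disjoint subsets of size $b:=\floor{\delta q}$, which fits because $\floor{1/\delta}\,\floor{\delta q} \le (1/\delta)(\delta q) = q$. Each such subset, paired with its fixed $\ell_i$, determines one group $\cV_j$: a set of $b$ rank-$i$ edificial equations, and hence a set of $b$ children of $u$ whose varieties are pairwise disjoint by the algebraic observation above. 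A direct count gives
\[
  (q-1)q^{i-1}\cdot q^{d+k-i-1}\cdot \floor{1/\delta}
  \;=\; (q-1)q^{d+k-2}\floor{1/\delta}
  \;=\; \floor{1/\delta}\,t/q
  \;=\; t'
\]
such groups, and they are automatically pairwise disjoint as collections of edificial equations, since distinct cosets within a fixed $\ell_i$ are disjoint and distinct $\ell_i$ yield disjoint families of equations.

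I do not anticipate a genuine obstacle: the heavy algebraic lifting has already been done inside \Cref{long_thm:edifice}, and what remains is essentially the combinatorial bookkeeping above, verifying properties~\ref{long_ax:partition}--\ref{long_ax:parallel}. The only point requiring care is that the construction must work uniformly at every level, which comes for free because the identity $(q-1)q^{i-1}\cdot q^{d+k-i}=t$ is level-independent; the degree $d+k-i$ of the relevant polynomial family shrinks with~$i$, but this only changes the number of cosets, not the coset size or the per-coset packing bound.
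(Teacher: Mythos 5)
Your proposal is correct and follows essentially the same route as the paper: the paper also groups the children of $u$ by the equivalence relation ``same linear form $\ell_j$ and polynomials differing by a constant,'' and derives disjointness from the nonzero $z_j$-coefficient forcing $f_{k-j}(x)-f^+_{k-j}(x)=0$, which is impossible for a nonzero constant difference. The only cosmetic difference is that the paper proves the case $\delta=1$ (one group per size-$q$ coset) and notes the general case follows as a corollary, whereas you carry out the $\floor{1/\delta}\cdot\floor{\delta q}\le q$ packing within each coset explicitly.
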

\longonly{
\squeeze{\squeezelen}
\begin{proof}
  It suffices to prove the lemma in the case $\delta = 1$; a little
  thought shows that the general case then follows as a corollary.

  Let~$\cT$ be the edifice constructed in \Cref{long_thm:edifice}. Let~$u$ be
  a non-leaf vertex of~$\cT$, at level $j+1$, where $j \in [k-1]$. Then
  the edges out of~$u$ are labelled by the~$t$ distinct rank-$j$
  edificial equations. Let us call two such equations
  $\ed{\ell_j}{f_{k-j}}$ and $\ed{\ell^+_j}{f^+_{k-j}}$ {\em similar} if
  $\ell_j = \ell^+_j$ and $f_{k-j} - f^+_{k-j}$ is a constant
  polynomial. This similarity relation then naturally extends to
  $\cC_\cT(u)$. Similarity is easily seen to be an equivalence relation,
  each of whose equivalence classes has size exactly $|\FF_q| = q$.
  Therefore there are exactly~$t/q$ equivalence classes; let
  $\cV_1,\ldots,\cV_{t/q} \subseteq \cC_\cT(u)$ be these classes.

  To show that~$\cT$ is $(q,t/q)$-wide, we shall show that these classes
  $\{\cV_i\}$ satisfy properties \ref{long_ax:partition}, \ref{long_ax:width},
  and~\ref{long_ax:parallel}.  The first two properties are immediate. For
  the third, consider arbitrary $v \ne v' \in \cV_i$, for some~$i$.
  Then~$v$ and~$v'$ are similar, which means that a point $\bx =
  (x,y_1,\ldots,y_{k-1}) \in X_v \cap X_{v'}$ must satisfy a pair of
  similar, but distinct, edificial equations. Let these equations be
  $\ed{\ell_j}{f_{k-j}}$ and $\ed{\ell_j}{f^+_{k-j}}$.
  Consulting \cref{long_eq:edificial}, we find that
  \[
    0 = y_j - y_j
    = \ell_j(y_1,\ldots,y_j,f_{k-j}(x)) - \ell_j(y_1,\ldots,y_j,f^+_{k-j}(x))
    = \ell_j(0, \ldots, 0, f_{k-j}(x) - f^+_{k-j}(x)) \,.
  \]
  By definition, the linear form $\ell_j(z_1,\ldots,z_j)$ has a nonzero
  $z_j$-coefficient, implying that $f_{k-j}(x) - f^+_{k-j}(x) = 0$.
  This is a contradiction, because $f_{k-j} - f^+_{k-j}$ is a {\em
  nonzero} constant polynomial.  Therefore such a point~$\bx$ does not
  exist, i.e., $X_v \cap X_{v'} = \varnothing$.
\end{proof}
}

\shortonly{
\begin{theorem} \label{long_thm:lb-partial}
  Let $c > 1$ be a constant. Let~$\cA$ be a $p$-pass streaming algorithm
  with the following guarantee. For all large enough~$n$ and~$m$ and all
  $\eps \in (0,\frac12]$, for all instances of $\partcover_{n,m,\eps}$, with
  probability at least $2/3$, $\cA$ returns the {\em value} of some
  $\alpha$-approximate solution to the instance, where
  $\alpha < \min\{n^{1/(p+1)}, \eps^{-1/p}\}/(8c(p+1)^2)$.
  Then~$\cA$ cannot be semi-streaming: it needs $\Omega(n^c/p^3)$ bits of space.
\end{theorem}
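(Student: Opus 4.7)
The plan is to repeat the $\mpj$-to-$\setcover$ template of \Cref{long_thm:lb}, but replace the base edifice of \Cref{long_thm:edifice} with an enlarged version built via \Cref{long_lem:rainbow,long_lem:is-wide}, so that leaf varieties are large enough that a $(1-\eps)$-partial cover cannot simply ignore them. The wideness parameter $b$ interpolates between the \setcover regime ($b=1$, reproducing \Cref{long_thm:lb}) and the genuine partial-cover regime ($b$ large), and the choice of $b$ is dictated by $\eps$.

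Concretely, I start from the $(p+1,d,q,t)$-edifice of \Cref{long_thm:edifice}, with $q$ a prime power and $n := q^{p+1}$. By \Cref{long_lem:is-wide}, this edifice is $(b,\Theta(t/(bq)))$-wide for every integer $b\in[1,q]$, so by \Cref{long_lem:rainbow} I obtain a new $(p+1,\,b^{p+1}(d+p),\,b^p q,\,\Theta(t/b))$-edifice $\cT'$ on the same universe $\FF_q^{p+1}$. I then feed $\cT'$ into the reduction of \Cref{long_thm:reduction} verbatim: YES instances of $\mpj_{\cT'}$ still admit a total cover of size $p+1$, and hence a $(1-\eps)$-partial cover of size $p+1$. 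For NO instances, let $z$ be the relevant leaf of $\cT'|_\pi$; the pointer-encoding sets are disjoint from $X_z$, singletons contribute one element, and each non-ancestor variety contributes at most $I := b^{p+1}(d+p-1)+p$ elements to $X_z$. Since a $(1-\eps)$-partial cover may skip at most $\eps n$ elements \emph{in total}, it must cover at least $|X_z| - \eps n \ge b^p q - \eps n$ elements of $X_z$, so it has size at least $(b^p q - \eps n)/I$.

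Next, I tune $b$. For $\eps \le 1/(2n^{p/(p+1)})$ I take $b=1$, and the NO bound becomes $\Omega(q/(d+p)) = \Omega(n^{1/(p+1)}/(p+1)^2)$, recovering \Cref{long_thm:lb}. For $\eps > 1/(2n^{p/(p+1)})$ I choose $b$ so that $b^p q \ge 2\eps n$, specifically $b := \lceil (2\eps q^p)^{1/p}\rceil \in [1,q]$ (this is where $\eps\le\tfrac12$ is used). A short calculation with these values gives $b^p q - \eps n \ge \eps n$ and $I = \Theta(\eps^{(p+1)/p} n\,(d+p))$, hence $(b^p q - \eps n)/I = \Omega(\eps^{-1/p}/(d+p))$. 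Dividing the NO bound by the YES value $p+1$ and picking $d = \Theta((c-1)(p+1))$ yields the approximation-ratio lower bound $\Omega(\min\{n^{1/(p+1)},\eps^{-1/p}\}/(p+1)^2)$, with the universal constants absorbed into the $8c$ in the statement. Finally, \Cref{long_thm:mpj} applied to $\cT'$ gives $\R^p(\mpj_{\cT'}) = \Omega(t'/p^2)$, and $t' = \Theta(t/b) = \Theta(q^{d+p+1}/b) = \Omega(n^c)$ under this choice of $d$; the standard $p$-pass streaming-to-communication simulation then turns this into a space lower bound of $\Omega(n^c/p^3)$.

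The main obstacle is the parameter-matching: I must verify that the wideness knob $b$ really can be chosen in $[1,q]$ across the whole range $\eps \in (0,\tfrac12]$, that the intersection bound $I = b^{p+1}(d+p)$ remains proportionate to the numerator $b^p q - \eps n$ so the advertised $\eps^{-1/p}$ scaling survives, and that the two regimes meet at $\eps \asymp n^{-p/(p+1)}$ with only constant loss. Once $b\asymp(\eps q^p)^{1/p}$ is pinned down, the numerator is $\Theta(\eps n)$ and the denominator is $\Theta(\eps^{(p+1)/p} n (d+p))$; the rest is plugging the resulting parameters into the machinery of \Cref{long_thm:reduction,long_thm:mpj} exactly as in the proof of \Cref{long_thm:lb}.
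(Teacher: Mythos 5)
Your proposal is correct and follows essentially the same route as the paper: the paper likewise composes \Cref{long_thm:edifice} with \Cref{long_lem:is-wide,long_lem:rainbow}, taking cluster size $\delta q$ with $\delta = \lceil(2\eps)^{1/p}q\rceil/q$ (your $b$), and runs the same YES/NO analysis with $Q_1 = p+1$ and $Q_0 = (|X_1|-\eps n)/I$, finishing via \Cref{long_thm:reduction,long_thm:mpj}. The only differences are cosmetic: the paper handles tiny $\eps$ by replacing it with $\eps' = n^{-p/(p+1)}$ rather than by your explicit two-regime split, and your intermediate claim that the base edifice is $(b,\Theta(t/(bq)))$-wide should read $(b,\Theta(t/b))$-wide --- a slip that does not propagate, since you correctly take arity $\Theta(t/b)$ for the merged edifice.
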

\squeeze{.7\squeezelen}
\begin{proof}[Proof sketch]
  Combining \Cref{long_thm:edifice} with \Cref{long_lem:rainbow,long_lem:is-wide} and
  working through some algebra, we find that there exists a $(p+1,
  (\delta q)^{p+1}(d+p), (\delta q)^p q, \floor{1/\delta}
  q^{d+p}(1-1/q))$-edifice~$\cT$ over a universe~$\cX$ with $|\cX| = n =
  q^{p+1}$.  Here, $q$ is a large prime power and $\delta \approx
  (2\eps)^{1/p}$.  Using this edifice in a reduction from \mpj
  analogously to \Cref{long_thm:reduction,long_thm:lb} gives us the claimed bound.
\end{proof}
}
\longonly{
\begin{theorem} \label{long_thm:lb-partial}
  Let $c > 1$ be a constant. Let~$\cA$ be a $p$-pass streaming algorithm
  with the following guarantee. For all large enough~$n$ and~$m$ and all
  $\eps \in (0,\frac12]$, for all instances of $\partcover_{n,m,\eps}$, with
  probability at least $2/3$, $\cA$ returns the {\em value} of some
  $\alpha$-approximate solution to the instance, where
  \begin{equation} \label{long_eq:lb-partial}
    \alpha < \frac{\min\{n^{1/(p+1)}, \eps^{-1/p}\}} {8c(p+1)^2} \,.
  \end{equation}
  Then~$\cA$ must use $\Omega(n^c/p^3)$ bits of space. In particular~$\cA$ cannot be semi-streaming.
\end{theorem}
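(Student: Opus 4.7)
The plan is to apply \Cref{long_thm:edifice} followed by \Cref{long_lem:is-wide} and \Cref{long_lem:rainbow} to build an edifice whose leaf varieties are calibrated to exceed $\eps n$ by a constant factor, then mimic the reduction of \Cref{long_thm:reduction} from tree pointer jumping. First, I would dispose of the easy regime: whenever $\eps^{-1/p} \ge n^{1/(p+1)}$ (equivalently $\eps \le n^{-p/(p+1)}$), every total cover is a fortiori a $(1-\eps)$-partial cover, so the claim reduces immediately to \Cref{long_thm:lb}. I henceforth assume $\eps > n^{-p/(p+1)}$.

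In this regime, I would take $d$ to be the least integer with $d+p \ge c(p+1)$, pick a sufficiently large prime power $q$ with $n = q^{p+1}$, and set $\delta = (2\eps)^{1/p} \in (0,1]$. Applying \Cref{long_thm:edifice} gives a $(p+1, d, q, q^{d+p+1}(1-1/q))$-edifice $\cT_0$ on $\cX = \FF_q^{p+1}$; \Cref{long_lem:is-wide} then shows $\cT_0$ is $(\floor{\delta q}, \floor{1/\delta}q^{d+p}(1-1/q))$-wide, and \Cref{long_lem:rainbow} produces a $(p+1, D, L, t^*)$-edifice $\cT$ on the same universe, with $D = \floor{\delta q}^{p+1}(d+p)$, $L = \floor{\delta q}^p q \ge (1-o(1)) \cdot 2\eps n$, and $t^* = \floor{1/\delta} q^{d+p}(1-1/q) \ge \Omega(n^c)$ (using $d+p \ge c(p+1)$ and $\floor{1/\delta} \ge 1$).

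Next, I would mimic the reduction of \Cref{long_thm:reduction} verbatim, encoding an input $\pi$ to $\mpj_\cT$ as a \partcover instance $\cI(\pi)$ via the same pointer/leaf-bit/singleton scheme. In the Yes case, $\cI(\pi)$ admits a total (hence $(1-\eps)$-partial) cover of size $p+1$. In the No case, any $(1-\eps)$-partial cover may leave at most $\eps n$ elements of $\cX$ uncovered, so must still cover at least $L - \eps n \ge \eps n (1-o(1))$ elements of the distinguished leaf variety $X_1$; since each available set meets $X_1$ in at most $D + p$ elements, the implied cover has size at least
\[
  \frac{L - \eps n}{(p+1)(D+p)}
  \;\ge\; \frac{(1-o(1))\eps n}{(p+1) \cdot \delta^{p+1} n(d+p)}
  \;=\; \frac{1 - o(1)}{2\delta(p+1)(d+p)}
  \;\ge\; \frac{\eps^{-1/p}}{8c(p+1)^2},
\]
where the last step uses $\delta^{p+1} = 2\eps\delta$, $1/\delta = (2\eps)^{-1/p} \ge \eps^{-1/p}/2$, and $d+p \le c(p+1) + 1 \le 2c(p+1)$, valid for $n$ large enough.

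Finally, the standard communication-to-streaming simulation (as in \Cref{long_thm:reduction}) together with \Cref{long_thm:mpj} applied to the underlying $t^*$-ary, $(p+1)$-level tree of $\cT$ yields the space lower bound $\Omega(\R^p(\mpj_\cT)/p) = \Omega(t^*/p^3) = \Omega(n^c/p^3)$. The main obstacle is calibrating $\delta$: it must be large enough that $L \approx \delta^p n$ exceeds $\eps n$ by a constant factor (so the partial-cover budget cannot trivially eliminate $X_1$), yet small enough that the tree arity $t^* \approx \delta^{-1} q^{d+p}$ remains of order $n^c$ (so the communication lower bound survives). The regime split $\eps > n^{-p/(p+1)}$ combined with $\delta = (2\eps)^{1/p}$ achieves both simultaneously; the remaining work is merely bookkeeping the floors and lower-order terms through the wideness/rainbow/reduction chain.
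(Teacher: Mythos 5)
Your main-regime argument ($\eps > n^{-p/(p+1)}$) follows the paper's proof essentially verbatim: the same chain \Cref{long_thm:edifice} $\to$ \Cref{long_lem:is-wide} $\to$ \Cref{long_lem:rainbow}, the same choice $\delta \approx (2\eps)^{1/p}$ so that the merged leaf variety has size about $2\eps n$, and the same $Q_0/Q_1$ computation. One bookkeeping caveat that actually matters: round $\delta q$ \emph{up}, not down (the paper takes $\delta = \ceil{(2\eps)^{1/p}q}/q$). With $b=\floor{(2\eps)^{1/p}q}$ the leaf size is $b^pq = 2\eps n\,(1-1/(\delta q))^p$ up to lower-order terms, and near the boundary of your regime $\delta q = \Theta(1)$, so $(1-1/(\delta q))^p$ is \emph{not} $1-o(1)$ and $L-\eps n$ can fail to be $\Omega(\eps n)$. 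Rounding up gives $L \ge 2\eps n$ outright, and the resulting factor-$2$ increase in $\delta$ is absorbed into the constant $8$.

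The genuine gap is your "easy regime'' $\eps \le n^{-p/(p+1)}$, which cannot be dispatched by citing \Cref{long_thm:lb}. The observation that every total cover is a partial cover bounds the optimum partial cover from \emph{above} by the optimum total cover; this helps in the Yes case but says nothing in the No case, which is where the gap must survive. Concretely, take $\eps = n^{-p/(p+1)}$ and the hard instances of \Cref{long_thm:lb}: the distinguished leaf variety $X_1$ there has size exactly $q = n^{1/(p+1)} = \eps n$, so in the No case the $p$ pointer-encoding sets alone already form a legal $(1-\eps)$-partial cover, of size $p < Q_1$. An algorithm meeting the stated partial-cover guarantee may therefore return the value $p$ on both Yes and No instances, and the reduction collapses. (Relatedly, $\cA$'s output is the size of a partial cover and can grossly underestimate the total-cover optimum, so it is not an approximation of the optimum value of \setcover in the sense \Cref{long_thm:lb} requires.) The paper's fix is to keep the wide/rainbow construction even in this regime: replace $\eps$ by $\eps' = n^{-p/(p+1)}$ (the hypothesis holds for every $\eps$, and $\min\{n^{1/(p+1)},\eps^{-1/p}\} = (\eps')^{-1/p}$ here), so the leaf variety has size about $2\eps' n$ --- twice the uncovered budget --- and the No-case count survives. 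With that substitution, your main-regime argument covers all $\eps \in (0,\frac12]$.
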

\squeeze{\squeezelen}
\begin{proof}
  This theorem is analogous to a combination of
  \Cref{long_thm:reduction,long_thm:lb}; the proof is along very similar lines.

  We may as well assume that $\eps^{-1/p} \le n^{1/(p+1)}$, because
  if~$\eps$ is too small for this to hold, then we simply consider the
  weaker problem of $(1-\eps')$-partial covering, where $\eps' =
  n^{-p/(p+1)}$.  
 
  Pick a sufficiently large prime power~$q$.  Put $n = q^{p+1}$, $d =
  (c-1)(p+1)+1$, $\tdelta = (2\eps)^{1/p}$, and $\delta = \ceil{\tdelta
  q}/q$.  By our assumption, we have $\tdelta \ge 1/q$ and $\tdelta \le
  \delta \le 2\tdelta$.

  Combining \Cref{long_thm:edifice} with \Cref{long_lem:rainbow,long_lem:is-wide} and
  working through some algebra, we find that there exists a $(p+1,
  (\delta q)^{p+1}(d+p), (\delta q)^p q, \floor{1/\delta}
  q^{d+p}(1-1/q))$-edifice~$\cT$ over a universe~$\cX$ with $|\cX| = n$.
  Using the varieties of~$\cT$, we encode each instance~$\pi$ of
  $\mpj_\cT$ as a collection~$\cI(\pi)$ of subsets of~$\cX$ exactly as
  in \Cref{long_thm:reduction} and treat~$\cI(\pi)$ as an instance of
  $\partcover_{n,m,\eps}$. As before, if $\mpj_\cT(\pi) = 1$, then
  $\cI(\pi)$ admits a total cover using $Q_1 := p+1$ sets.

  For the case $\mpj_\cT(\pi) = 0$, we refine the argument used for
  \Cref{long_thm:reduction} as follows. Let~$X_1$ be the variety of~$\cT$ at
  the unique leaf, $v_1$, in $\cT|_\pi$.  As before, the elements of~$X_1$
cannot be covered by sets corresponding to ancestors of~$v_1$,
  and each of the remaining sets in $\cI(\pi)$ can cover at most
  $(\delta q)^{p+1}(d+p)$ such elements. Every $(1-\eps)$-partial cover
  must, in particular, cover at least $|X_1| - \eps |\cX|$ elements of
  $X_1$.  It follows that the cheapest such partial cover uses at least
  $Q_0 := (|X_1| - \eps |\cX|)/((\delta q)^{p+1}(d+p))$ sets. Now,
  \begin{align}
    \frac{Q_0}{Q_1}
    = \frac{|X_1| - \eps |\cX|}{(\delta q)^{p+1}(p+d)(p+1)}
    &\ge \frac{(\delta q)^p q - \eps q^{p+1}}{(\delta q)^{p+1}(p+d)(p+1)} \label{long_eq:leaf-card} \\
    &= \frac{(\delta q)^p q - \frac12 \tdelta^p q^{p+1}}{(\delta q)^{p+1} c(p+1)^2} \notag \\
    &\ge \frac{1}{2\delta c(p+1)^2} \label{long_eq:round-up} \\
    &\ge \frac{1}{4 (2\eps)^{1/p} \cdot c(p+1)^2} 
    \ge \frac{\eps^{-1/p}}{8c(p+1)^2} \,, \label{long_eq:round-apx}
  \end{align}
  where \eqref{long_eq:leaf-card} uses the parameters of the edifice~$\cT$,
  \eqref{long_eq:round-up} uses $\tdelta \le \delta$, and
  \eqref{long_eq:round-apx} uses $\delta \le 2\tdelta$.

  Therefore, \cref{long_eq:lb-partial} gives $\alpha < Q_0/Q_1$. As in
  \Cref{long_thm:reduction}, with an approximation this good, $\cA$ can be
  used to determine $\mpj_\cT(\pi)$ and must consequently use
  $\Omega(t/p^3)$ bits of space, where $t$ is the arity of $\cT$. Since
  $t = \floor{1/\delta} q^{d+p}(1-1/q) = \Omega(n^c)$, this space lower
  bound is $\Omega(n^c/p^3)$.
\end{proof}
}

\section{Discussion}
\label{long_sec:discussion}

\longonly{
We conclude with a more technically detailed description of selected
results from previous work, with the goal of shedding more light on some
of our own results.

%

In the external-memory setting, without a streaming
restriction, an \emph{eager} implementation of the greedy algorithm
involves an inverted index and a priority queue of set sizes.
Unfortunately, this involves arbitrary (non-local) memory
accesses, leading to poor performance.

Relaxing the strict greedy requirement,
Cormode, Karloff, and Wirth add a set to the solution if its contribution is at
least~$1/\beta$ times the best~\cite{CormodeKW10}.
So that all disk accesses are sequential, initially they allocate sets
to ``buckets'' (files) according to their size, with a bucket for each range
$[\beta^{j},\beta^{j+1})$, $i=0,\ldots,\kappa$, where $\kappa =
\max\lfloor\log_\beta |S_i|\rfloor$.
Starting from $j=\kappa$ down to~$0$, as each set in bucket~$j$ is examined, sequentially,
set~$S_i$ is added to $\Sol$ only if its contribution is at least $\beta^j$;
otherwise, $(i,S_i\setminus C)$ is appended to the appropriate bucket.
This is essentially the same thresholding as \Cref{long_alg:basic}, with the same
pass/approximation tradeoff, but implemented so
that the total amount of data handled is $O(\beta/(\beta-1))$ times the input
size.

Blelloch, Simhadri, and Tangwongsan solve very large set cover instances on disk and in parallel
in RAM~\cite{blellochst12}.
They consider situations in which there is less than one word of memory per
element.
Their pre-bucketing is much like the geometric ranges of DFG, and their MaNIS
scheme appears to be a randomised, and parallelisable,
version of the pass through the sets in a bucket.

The Emek--Ros{\'e}n scheme~\cite{EmekR14}
is in some sense like DFG in its having a hierarchy of thresholds that are
powers of~$2$.
Its purpose however, is to facilitate partial covers with (item and) set costs.
In the unweighted setting,
as each set~$S$ is seen,
it is deemed to cover some subset~$T \subseteq S$,
where $2^i \leq |T| <2^{i+1}$,
if each element in~$T$ was previously covered by some subset
of size $< 2^i$, or was previously uncovered.
This is somewhat like all the runs of DFG with $\beta=2$ being folded into
one.
In parallel, the scheme records the cheapest set that covers each item
(amongst equal-cheapest, choose the first that occurs in the stream).
This step is similar to the folding in \Cref{long_alg:folded}.

We contrast the threshold chosen in our algorithm with that in the
Emek--Ros{\'e}n algorithm.
In our two-pass algorithm (folded into one), $\tau = \sqrt n$, leading to a
$2\sqrt n$ approximation (in fact, $|\Sol| \leq \sqrt n(1+|\Opt|)$).
Once the stream is done, the Emek--Ros{\'e}n algorithm can choose a
threshold~$\tau = 2^i$.
Items that are recorded as covered by some such~$T \subseteq S$,
with $|T|\geq \tau$, are certified to be covered by~$S$; those ``below
the threshold'' are instead covered by their cheapest set.
This way, at most $O(n/\tau)$ $T$-sets are chosen and $O(\tau w(\Opt))$
elements are cheapest-set covered.
Since such a cheapest set has cost at most $w(\Opt)$,
by setting this threshold~$\tau$ to be approximately $\eps n/w(\Opt)$,
the algorithm
returns an $O(w(\Opt)/\eps)$ weight solution.
Of course, we do not know~$w(\Opt)$, but it suffices to choose the largest~$\tau$ so
that at most~$\eps n$ elements are cheapest-set covered.
When $\eps \leq 1/\sqrt n$ however, it is better to choose~$\tau$ to leave at
most~$\sqrt{n}$ cheapest-set covered elements, hence $\tau = \Theta(\sqrt n
/w(\Opt))$.

This tradeoff allows the Emek--Ros{\'e}n algorithm to account for set weights.
In the unweighted case, however, our solution has at most $\sqrt n(1+|\Opt|)$
sets, whereas the Emek--Ros{\'e}n solution has at most $\sqrt n (1+8|\Opt|)$ sets.
As mentioned in \Cref{long_sec:partial-upper-bound}, the latter expression can
become arbitrarily close, i.e., $\sqrt n(1+(1+\delta)^3|\Opt|)$, with space increasing by a
factor of $O(1/\delta)$.

}
\section*{Acknowledgment}

The second author is grateful to Andrew McGregor for discussions about some recent work.

\bibliographystyle{abbrv}

\end{longversion}

\end{document}